\documentclass[12pt,reqno]{amsart}
\usepackage[headings]{fullpage}
\usepackage{amssymb,amsmath,amscd,bbm,tikz}
\usepackage{graphicx}
\usepackage{texdraw}
\usepackage{braket}
\usepackage{pb-diagram}
\usepackage{url}
\usepackage[T1]{fontenc}
\usepackage{slashed}    
\usepackage{tikz-cd}
\usetikzlibrary{decorations.markings}
\usetikzlibrary{decorations.pathmorphing}
\tikzset{snake it/.style={decorate, decoration=snake}}
\usetikzlibrary{patterns}
\usepackage{stmaryrd}
\usepackage{tabularray}

\usepackage{todonotes}
\usepackage{enumerate}
\usepackage{mathtools}

\usepackage{arydshln}

\usepackage{verbatim} 
\usepackage{xcolor}

\usepackage{hyperref}
\usepackage{cleveref}

\newtheorem{theorem}{Theorem}[section]
\theoremstyle{definition}
\newtheorem{prop}[theorem]{Proposition}
\newtheorem{lemma}[theorem]{Lemma}
\newtheorem{definition}[theorem]{Definition}
\newtheorem{remark}[theorem]{Remark}

\newtheorem{question}[theorem]{Question}

\def\BN{\mathbb N}

\def\BP{\mathbb P}

\def\calR{\mathcal R}
\def\calW{\mathcal W}

\def\calK{\mathcal K}

\def\calC{\mathcal C}

\def\calP{\mathcal P}
\def\calS{\mathcal S}
\def\calB{\mathcal B}

\def\calM{\mathcal M}
\def\calL{\mathcal L}

\def\calN{\mathcal N}

\def\ri{\mathrm{i}}

\newcommand\Res[1]{\,\underset{#1}{\mathrm{Res}\,}}

\def\Re{\mathrm{Re}}
\def\re{\mathrm{e}}
\def\Im{\mathrm{Im}}

\def\be{\begin{equation}}
\def\ee{\end{equation}}

\def\bb{\mathsf{b}}
\def\cb{c_{\mathsf{b}}}

\usepackage{accents}

\newcommand{\BR}{\mathbb{R}}
\newcommand{\BQ}{\mathbb{Q}}
\newcommand{\BC}{\mathbb{C}}
\newcommand{\BH}{\mathbb{H}}
\newcommand{\BZ}{\mathbb{Z}}

\makeatletter

\renewcommand\thepart{\@Roman\c@part}%
\renewcommand\part{%
   \if@noskipsec \leavevmode \fi
   \par
   \addvspace{6.7ex}%
   \@afterindentfalse
   \secdef\@part\@spart}
\def\@part[#1]#2{%
    \ifnum \c@secnumdepth >\m@ne
      \refstepcounter{part}%
      \addcontentsline{toc}{part}{Part~\thepart.\ #1}%
    \else
      \addcontentsline{toc}{part}{#1}%
    \fi
    {\parindent \z@ \raggedright
     \interlinepenalty \@M
     \normalfont
     \ifnum \c@secnumdepth >\m@ne
       \centering\large\scshape \partname~\thepart.%
       \hspace{1ex}%
     \fi%
     \large\scshape #2%
     \markboth{}{}\par}%
    \nobreak
    \vskip 4.7ex
    \@afterheading}
  \def\@spart#1{
  \refstepcounter{part}%
  \addcontentsline{toc}{part}{#1}%
    {\parindent \z@ \raggedright
     \interlinepenalty \@M
     \normalfont
     \centering\large\scshape #1\par}%
     \nobreak
     \vskip 4.7ex
     \@afterheading}
\renewcommand*\l@part[2]{%
  \ifnum \c@tocdepth >-2\relax
    \addpenalty\@secpenalty
    \addvspace{0.75em \@plus\p@}%
    \begingroup
      \parindent \z@ \rightskip \@pnumwidth
      \parfillskip -\@pnumwidth
      {\leavevmode
       \normalsize \bfseries #1\hfil \hb@xt@\@pnumwidth{\hss #2}}\par
       \nobreak
       \if@compatibility
         \global\@nobreaktrue
         \everypar{\global\@nobreakfalse\everypar{}}%
      \fi
    \endgroup
  \fi}

\def\l@subsection{\@tocline{2}{0pt}{2pc}{6pc}{}}
\makeatother

\hypersetup{colorlinks=true,linkcolor=blue,citecolor=teal,filecolor=magenta,urlcolor=cyan}

\usepackage[backend=bibtex,style=alphabetic,sorting=nyt,isbn=false,url=false,doi=false,maxalphanames=4,minalphanames=4,mincitenames=4,maxcitenames=5,minnames=4,maxnames=10,giveninits=true,maxbibnames=99]{biblatex}
\renewbibmacro{in:}{}

\numberwithin{equation}{section}

\usepackage{microtype}
\begin{document}
\title[Quantum modularity in refined topological recursion]{
 Quantum modularity in refined topological recursion}
\author{Veronica Fantini}
\address{Laboratoire Math\'{e}matique d'Orsay, Universit\'{e} Paris Saclay \\ 
          \newline
          {\tt \url{https://sites.google.com/view/vfantini/home-page}}}
\email{veronica.fantini@universite-paris-saclay.fr}
\author{Kento Osuga}
\address{Kobayashi--Maskawa Institute \& Graduate School of Mathematics, Nagoya University
          \newline
         {\tt \url{https://sites.google.com/view/kentoosuga/}}}
\email{osuga@math.nagoya-u.ac.jp}

\date{\today }

\begin{abstract}
  In this paper we study the Borel summability and quantum modularity of the free energy computed by refined topological recursion on the Weber-type refined spectral curve. On the one hand, we verify expected properties, namely, Stokes rays come in a pair and the Borel--Laplace sum of the free energy coincides with the Barnes double gamma function. Somewhat unexpectedly, on the other hand, we prove that the generating function of the Stokes constants along each ray is a Jacobi quantum modular function. Our proof suggests that quantum modularity may be a general feature of refined topological recursion, applicable beyond the Weber-type refined spectral curve.
\end{abstract}

\maketitle

\tableofcontents

\newpage

\section{Introduction}
 In this paper, we will uncover an intriguing appearance of quantum modularity in refined topological recursion on the Weber-type refined spectral curve which is of genus zero.

\subsection{Backgrounds}

\subsubsection{Modularity in topological strings}

In the context of topological strings, two types of modularity have been commonly discussed; with respect to a coordinate $\tau$ on the corresponding complex moduli space, or with respect to the string coupling $\hbar$ (often denoted by $g_s$). The former is expected as a consequence of the holomorphic anomaly~\cite{ABK}, and indeed, the genus-$g$ free energy of several Calabi--Yau threefolds are proven to be quasi modular for each $g\geq2$ \cite{CI,Guo:2026xpg,Douaud-KP26}. For local $\BP^2$, it is also known that the genus-$g$ free energy in the Nekrasov--Shatashvili limit is a quasi modular function ~\cite{Bousseau:2020ckw}. 

On the other hand, modularity with respect to the string coupling $\hbar$ arises when one considers the resurgent structure of the formal asymptotic series of certain invariants. For instance, such a modularity is observed in the framework of the topological string/spectral theory correspondence \cite{GHM,CHM}. More concretely, building on the results of \cite{rella22}, the resurgent structure of so-called fermionic traces for weighted projective spaces is investigated in \cite{FR1phys,FR-pmn} and shown to give rise to quantum modular functions.

\subsubsection{Holomorphic quantum modular forms}

Quantum modular forms were introduced by Zagier in~\cite{zagier_modular} to describe modular properties of quantum invariants of knots and $3$-manifolds, following the pioneering work of Lawrence and Zagier~\cite{qCS1}. Later in 2020, a class of quantum modular forms, called \emph{holomorphic quantum modular forms}, was introduced by Zagier~\cite[min 21:45]{zagier-talk}.  
Since then, their study has become active in number theory (e.g. \cite{MS-R,FR1maths,gene}), quantum topology\footnote{In this context, the modular parameter is related to the level $k$ of the Chern--Simons action.} (e.g. \cite{DG-qmod,Wheeler-3manifold,yuya,GZ1,GZ2} and references therein), and more. 

As holomorphic quantum modular forms are relatively new objects to study, it is meaningful to find concrete examples in different areas of mathematics and physics. In this regard, our main result serves as a new observation of quantum modularity. It is worth emphasising that this modularity is not with respect to the string coupling $\hbar$ nor the complex moduli parameter $\tau$, but with respect to the refinement parameter $\beta$ in refined topological recursion while the refined spectral curve is of genus zero, and moreover, while $\hbar$ is present and finite.

\subsubsection{Refined topological recursion}

Topological recursion due to Eynard and Orantin \cite{EO07} is a universal recursive mechanism that governs various invariants from algebraic geometry to mathematical physics. One of the most notable outcomes of topological recursion is the remodeling theorem \cite{BKMP07,FLZ16} which can be thought of as an all-genus mirror symmetry for toric Calabi--Yau threefolds. It has been shown that topological recursion also plays important roles in Hurwitz theory \cite{BM07,ACEH18}, integrable hierarchies \cite{BDKS20,ABDKS24}, $\calW$-algebras \cite{KS17,ABCO17,BBCCN18}, and more.

The idea of \emph{refined topological recursion} is discussed in \cite{CE06,C10,BMS10} in the context of $\beta$-deformed matrix models, and its mathematical description was recently formulated in \cite{KO22,Osu23-1} (for degree-two curves). This has triggered a new research direction connecting with geometry and combinatorics in the non-orientable setting, e.g. $b$-Hurwitz numbers \cite{CDO24-1,CDO24-2}, moduli spaces of bordered hyperbolic Klein surfaces \cite{GGO25}, and moduli spaces of metric M\"{o}bius graphs \cite{CGGO26}. Refined topological recursion yet enjoys another property in terms of refined BPS invariants \cite{KO23,KW24}.

\subsection{Main results}

We focus on the so-called Weber-type refined spectral curve $\calS^\mu$. It consists of a degree-two genus-zero algebraic curve $\calC$ and a certain choice of multidifferentials $\{\omega_{0,1}, \omega_{0,2},\omega_{1/2,1}\}$. Refined topological recursion takes $\calS^\mu$ as initial data, and it returns a sequence of multidifferentials $\{\omega_{g,n}\}_{g,n}$ labeled by $g\in\frac12\BZ_{\geq0}$ and $n\in\BZ_{\geq0}$. A key result for us is proven in \cite{KO23} that, for $g>1$, the genus-$g$ free energy $F_g\coloneqq\omega_{g,0}$ of $\calS^\mu$ is derived explicitly in terms of the double Bernoulli polynomial $B_{2,k}(x|a_1,a_2)$ as below:
\begin{equation}
  F_g=\frac{(-1)^{2g-2} B_{2,2g}\left(\frac{\mathfrak{b}}{2}(1-\mu)|\beta^{1/2},-\beta^{-1/2}\right)}{2g(2g-1)(2g-2)} t^{2-2g}\label{intro:RTRF_g}
\end{equation}
where $t,\mu$ are related to certain residues of $\omega_{0,1},\omega_{1/2,1}$, the other two parameters $\mathfrak b$ and $\beta$ are related to each other by $\mathfrak b = \beta^{1/2}-\beta^{-1/2}$, and $F_g$ with $g\leq1$ are also determined separately. Assembling $F_g$ for $g>1$, we define the \emph{stable free energy} $F^{\rm st}$ as a formal series in $\hbar$:
\begin{equation}
  F^{\rm st}\coloneqq\sum_{g\in\frac12\BZ_{>2}}\hbar^{2g-2}\,F_g.
\end{equation}

So how can modularity possibly appear in refined topological recursion? Since a structural relation (called modular resurgence) between resurgence and modularity in $\hbar$ is recently conjectured in \cite{FR1maths}, a constructive approach is to consider the Borel-summability of $F^{\rm st}$. By doing so, we find that non-pertubative corrections to $F^{\rm st}$ are quantum modular functions, but somewhat surprisingly, with respect to $\beta$, not $\hbar$ --- a modularity in $\beta$ is implicitly hinted also in \cite{Zam05}, though not quantum modularity.
 
Let us first summarise our result on the summability of $F^{\rm st}$. In short, a formal series $f$ is called \emph{Borel-summable} if $s^\theta[f]$, the Borel--Laplace transform of $f$ in the direction of $\theta$, satisfy certain analytic conditions. Note that, although the summability of $F^{\rm st}$ is expected from the $\beta$-deformed matrix model perspective, no proof has been given in line with refined topological recursion, hence we prove it for completeness. In particular, in order to remove ambiguities, we set the branch of $\beta^{1/2}$ along the negative real axis which ensures that ${\rm Re}(\beta^{\pm1/2})>0$:

\begin{theorem}\label{thm-1-intro}
The stable free energy $F^{\rm st}\in\BQ[\mu,\beta^{\pm1/2}][\![\hbar/t]\!]$ is Borel-summable, and for ${\rm Re}(\hbar/t)>0$, $s^0[F^{\rm st}]$ is given by the Barnes double gamma function:
\begin{equation}\label{intro:borel_sum_Fstable}
\frac{F_0}{\hbar^2}+\frac{F_{\frac12}}{\hbar}+F_1+s^0[F^{\rm st}](\hbar;\beta)=\log\Gamma_2\left(\frac{t}{\hbar}+\frac{\mathfrak{b}}{2}(1-\mu)\bigg| \beta^{1/2},-\beta^{-1/2}\right)\,.
\end{equation}
\end{theorem}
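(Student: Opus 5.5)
We will work with the integral representation of the Barnes double gamma function and read off the Borel transform of $F^{\rm st}$ from it. Write $x=t/\hbar$, $a_1=\beta^{1/2}$, $a_2=-\beta^{-1/2}$, and $c=\frac{\mathfrak b}{2}(1-\mu)$. The starting point is the generating function of the double Bernoulli polynomials,
\begin{equation*}
\frac{s^2 e^{-cs}}{(1-e^{-a_1 s})(1-e^{-a_2 s})}=\sum_{k\geq0}\frac{(-1)^k B_{2,k}(c|a_1,a_2)}{k!}\,s^k .
\end{equation*}
We will also use the Shintani/Barnes-type integral formula
\begin{equation*}
\log\Gamma_2(x+c|a_1,a_2)=\int_0^\infty \frac{ds}{s}\left(\frac{e^{-(x+c)s}}{(1-e^{-a_1s})(1-e^{-a_2s})}-\text{(polar part in }s)\right),
\end{equation*}
whose subtracted terms yield exactly the contributions $\hbar^{-2}F_0$, $\hbar^{-1}F_{1/2}$ and $F_1$ (the latter including $\log x$). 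Since $a_2$ has negative real part under our branch choice, one factor is rewritten as $1-e^{-a_2 s}=-e^{-a_2 s}(1-e^{a_2 s})$, giving a kernel that decays properly. Alternatively, one can define $\Gamma_2$ with $(a_1,a_2)$ through analytic continuation in the regulating parameters, and we will fix this convention first.

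\textbf{Step 1: explicit Borel transform.} Expanding the regular remainder of the kernel at $s=0$ and integrating termwise with $\int_0^\infty s^{k-3}e^{-xs}ds=(k-3)!\,x^{2-k}$ reproduces, with $k=2g$, the asymptotic series
\begin{equation*}
\sum_{g>1}\frac{(-1)^{2g}B_{2,2g}(c)}{2g(2g-1)(2g-2)}\,x^{2-2g},
\end{equation*}
that is, $F^{\rm st}$ by \eqref{intro:RTRF_g}, including half-integer $g$ (odd $k$). This identifies the Borel transform of $F^{\rm st}$, in the variable $\zeta$ conjugate to $x=t/\hbar$, with the explicit function
\begin{equation*}
\mathcal B[F^{\rm st}](\zeta)=\zeta^{-3}\left(\frac{\zeta^2e^{-c\zeta}}{(1-e^{-a_1\zeta})(1-e^{-a_2\zeta})}-\text{Taylor terms of order}\le 2\right).
\end{equation*}

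\textbf{Step 2: analytic properties.} This Borel transform is meromorphic in $\zeta$. Its poles lie at $\zeta\in 2\pi\ri\,\BZ_{\neq0}/a_j$, on rays through $\pm 2\pi\ri\beta^{-1/2}$ and $\pm 2\pi\ri\beta^{1/2}$. Because $\mathrm{Re}(\beta^{\pm1/2})>0$, these rays avoid the positive real axis, so the direction $\theta=0$ is non-singular. The next task is exponential growth in a sector around $\BR_{>0}$: the bound
\begin{equation*}
\left|\mathcal B[F^{\rm st}](\zeta)\right|\le C e^{K|\zeta|}
\end{equation*}
should follow from the denominators being bounded away from zero on sectors avoiding the pole rays, together with $|e^{-c\zeta}|\le e^{|c||\zeta|}$. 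Combined with the Gevrey-1 growth of the coefficients, this gives Borel summability in the sense defined in the paper. The Laplace transform along $\theta=0$ then converges for $\mathrm{Re}(x)$ large; by analytic continuation it extends to $\mathrm{Re}(\hbar/t)>0$, and there it equals the integral representation minus the polar terms, which is \eqref{intro:borel_sum_Fstable}.

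\textbf{Step 3: matching $F_0$, $F_{1/2}$, $F_1$, and the main obstacle.} The remaining and most delicate step is to check that the $F_0,F_{1/2},F_1$ supplied by refined topological recursion on $\calS^\mu$ (quadratic in $x$, linear in $x$, and the $\log x$ term with its constant, including the $\zeta'$-type constants) coincide exactly with the non-integrable polar part and normalization of $\log\Gamma_2$. The main obstacle is this normalization and branch bookkeeping: the sign convention $a_2=-\beta^{-1/2}$ and the precise definition of $\Gamma_2$ for such parameters must be consistent with the Barnes multiple zeta normalization. We would handle it by computing $\partial_x^3$ of both sides, where no constants intervene, and then fixing the integration constants by comparing the large-$x$ asymptotics of the three lowest terms.
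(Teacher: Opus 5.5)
Your Steps 1--2 are correct: your kernel is the paper's Borel transform in another form, its poles sit at $2\pi\ri\beta^{\pm1/2}\BZ_{\neq0}$, and the exponential-type bound near $\BR_{>0}$ holds. The gap is in what the right-hand side means.

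\textbf{The meaning of $\Gamma_2$ with the period $-\beta^{-1/2}$.} The Binet/Shintani-type integral formula you invoke is a theorem only when both periods have positive real part. For $a_2=-\beta^{-1/2}$ you therefore need a definition of $\Gamma_2$. You propose to settle this by $\partial_x^3$ plus large-$x$ asymptotics, but that cannot work: natural competing conventions differ by terms beyond all orders.

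For example, the lattice $\{n_1\beta^{1/2}-n_2\beta^{-1/2}\}$ lies in the upper half-plane, so Barnes' $\zeta$-regularisation applies to it directly. This is also where continuing Barnes' definition in $a_2$ leads. That convention gives $\partial_x^3\log\Gamma_2(x+c)=-2\sum_{n_1,n_2\ge0}(x+c+n_1\beta^{1/2}-n_2\beta^{-1/2})^{-3}$. By contrast, the Borel side gives
\[
\partial_x^3\big(F^{\rm unst}+s^0[F^{\rm st}]\big)=-\int_0^\infty\frac{\zeta^2\re^{-(x+c)\zeta}\,d\zeta}{(1-\re^{-a_1\zeta})(1-\re^{-a_2\zeta})}=2\sum_{n_1\ge0,\,m\ge1}(x+c+n_1\beta^{1/2}+m\beta^{-1/2})^{-3}.
\]
Their difference is $-\partial_x^3\log(\re^{2\pi\ri\beta^{1/2}(x+c)};q)_\infty$. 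This is the Stokes jump across the ray through $-2\pi\ri\beta^{1/2}$: the $\zeta$-regularised function is the Borel sum in direction $-\pi/2$, not $0$. The jump is exponentially small as $x\to+\infty$, so no asymptotic matching can detect or remove it.

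The missing ingredient is the identity $\Gamma_2(x|a_1,-a_2)\,\Gamma_2(x+a_2|a_1,a_2)=1$, together with $B_{2,n}(x|a_1,-a_2)=-B_{2,n}(x+a_2|a_1,a_2)$. This is how the paper gives meaning to the period $-\beta^{-1/2}$. Your rewriting $1-\re^{-a_2s}=-\re^{-a_2s}(1-\re^{a_2s})$ is exactly this identity at the level of the integrand. It turns your integral into $-\log\Gamma_2(x+c+\beta^{-1/2}|\beta^{1/2},\beta^{-1/2})$. Lemma~\ref{lem:Borel_sum_BBS} then legitimately applies with $\delta=c+\beta^{-1/2}$, and that is the paper's proof.

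\textbf{The $\log\hbar$ term in Step 3.} Step 3 also cannot come out exact as you state it. With $x=t/\hbar$, the non-decaying part of the expansion is
$-\tfrac12\mathsf B_{2,2}(x+c)\log x+\tfrac34x^2\mathsf B_{2,0}+x\mathsf B_{2,1}(c)=\tfrac12x^2\log x-\tfrac34x^2-\tfrac{\mathfrak b\mu}{2}x\log x+\tfrac{\mathfrak b\mu}{2}x-\tfrac{2+(1-3\mu^2)\mathfrak b^2}{24}\log x$.
This equals $\frac{F_0}{\hbar^2}+\frac{F_{1/2}}{\hbar}+F_1$ only after every $\log t$ is replaced by $\log(t/\hbar)$; note that $F_1$ contains $\log t$, not $\log x$ as you wrote. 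The mismatch is $\tfrac12\mathsf B_{2,2}(t/\hbar+c)\log\hbar$. Your own $\partial_x^3$ test would expose it, since the left-hand side contains no $\log x$ at all. The paper absorbs this term through the degree-$(2-2g)$ ambiguity of $F_0,F_{1/2},F_1$. It proves the identity with $F^{\rm unst}$, which carries exactly this $\log\hbar$ correction, in place of the bare sum $\frac{F_0}{\hbar^2}+\frac{F_{1/2}}{\hbar}+F_1$.
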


It can be shown that the Borel transform of $F^{\rm st}$ has singularities at $2\pi \ri t \beta^{\pm1/2} k$ for $k\in\BZ_{\neq 0}$. As a consequence, $s^{\theta}[F^{\rm st}]$ is discontinuous as $\theta$ varies across the direction of $\theta_{\pm}\coloneqq\arg\ri t\beta^{\pm1/2}$ and also $\theta_\pm+\pi$. We, therefore, define ${\rm disc}_\theta[F^{\rm st}]\coloneqq s^{\theta+\varepsilon}[F^{\rm st}]-s^{\theta-\varepsilon}[F^{\rm st}]$ for a sufficiently small $\varepsilon\in\mathbb{R}_{>0}$ which is nontrivial only when $\theta\in\{\theta_\pm,\theta_\pm+\pi\}$ and describes the non-pertubative corrections. Note that when $\beta=1$, we have $\theta_+=\theta_-$ so that two rays coincide. In other words, the refinement parameter $\beta$ splits the unrefined Borel singularity into two. This phenomenon has already been observed in \cite{AMP23}, which reduces to the case with $\mu=0$.

\bigskip 

We will next summarise our observation about quantum modularity, and to this end, let us clarify our setting. First, we choose $\beta\in\BH$ which we think of as the corresponding modular parameter. This means that such a modularity becomes invisible once we set $\beta$ to a specific value (e.g. $\beta=1$ which corresponds to the Eynard--Orantin recursion). 
Second, we regard $(\beta,\epsilon_2)$ as independent variables instead of $(\beta,\hbar)$ where $\epsilon_2\coloneqq-\beta^{-1/2}\hbar$ and $\epsilon_1\coloneqq\beta^{1/2}\hbar$ --- one can equivalently choose $(-\beta^{-1},\epsilon_1)$ as independent variables, if preferred. We also note that this is equivalent to rescaling $\omega_{g,n}$ by $\beta^{g/2}$, and $\omega_{g,n}$ often behaves better after such a simple rescaling as observed in \cite{CDO24-2,CGGO26}.

Our primary interest is in so-called \emph{Jacobi quantum modular forms}. Loosely speaking, a Jacobi quantum modular form is a function $f:\BC\times\BH\to\BC$ that obeys specific transformation laws under the modular group $\Gamma\subseteq \mathrm{SL}_{2}(\BZ)$. An important difference from standard modular forms is that it carries two variables, the one in $\BC$ is called the elliptic variable and the other in $\BH$ is the modular parameter. Our main result shows that each nontrivial discontinuity turns out to be a Jacobi quantum modular function. To be more precise, we have:

\begin{theorem}\label{thm-2-intro}
Set $\mu\in\BZ$. The discontinuities ${\rm disc}_\theta[F^{\rm st}]$ in the direction of $\theta\in\{\theta_\pm\,,\theta_\pm+\pi\}$ are all Jacobi quantum modular functions with respect to the elliptic variable $t/\epsilon_2$ and the modular parameter $\beta$.
\end{theorem}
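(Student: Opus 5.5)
We first make the discontinuities explicit, and only then test the modular transformation laws. By Theorem~\ref{thm-1-intro}, $s^0[F^{\rm st}]$ equals $\log\Gamma_2$ minus the explicit terms $F_0/\hbar^2+F_{1/2}/\hbar+F_1$. We therefore compute ${\rm disc}_\theta[F^{\rm st}]$ with a Shintani/Barnes-type integral representation of $\log\Gamma_2(w\,|\,\beta^{1/2},-\beta^{-1/2})$ with $w=t/\hbar+\tfrac{\mathfrak b}{2}(1-\mu)$. Equivalently, we use the partial-fraction expansion of the Borel transform of $F^{\rm st}$. Its integrand is of the form
\[
\frac{e^{-w\zeta}}{\zeta\,(1-e^{-\beta^{1/2}\zeta})(1-e^{\beta^{-1/2}\zeta})}
\]
minus the first Laurent terms.

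The poles lie at $\zeta=2\pi\ri k\beta^{-1/2}$ and $\zeta=2\pi\ri k\beta^{1/2}$. In the $\hbar/t$ Borel plane these give the rays $\theta_\pm$ and $\theta_\pm+\pi$. Summing residues along, say, the ray $\theta_-$ produces, up to elementary terms, a series of the form
\[
\sum_{k\geq1}\frac{1}{k}\,\frac{e^{2\pi\ri k\, t/\epsilon_2}\,e^{\pi\ri k(1-\mu)\cdots}}{1-e^{2\pi\ri k\beta}},
\]
which is a $q$-series with $q=e^{2\pi\ri\beta}$ and $y=e^{2\pi\ri t/\epsilon_2}$. The key point is that in $\beta^{1/2}\cdot(-\beta^{-1/2})^{-1}$ the ratio of the two periods is $-\beta$. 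This is why $\beta$ plays the role of $\tau$ and $t/\epsilon_2$ that of the elliptic variable $z$. The factor $e^{\pi\ri k\mathfrak b(1-\mu)\cdots}$ collapses to a sign or a shift of $z$ by an integer multiple of $\tau$ exactly when $\mu\in\BZ$. That is where the hypothesis $\mu\in\BZ$ enters. For the rays $\theta_+$ and $\theta_\pm+\pi$ the same computation works, with the roles of $\beta^{1/2}$ and $-\beta^{-1/2}$ exchanged (giving $\tau=-1/\beta$, i.e. $\epsilon_1$) or with $k\mapsto -k$. We will identify each discontinuity as $\log$ of a ratio of Pochhammer symbols, roughly $\log (y q^{a};q)_\infty$ up to elementary terms.

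Next we prove Jacobi quantum modularity of $\phi(z,\tau)=\log(yq^a;q)_\infty$, or the corresponding dilogarithmic $q$-series, in the sense of holomorphic quantum modular forms. Invariance under $T$ and elliptic shifts $z\mapsto z+\lambda\tau+\nu$ follows from the product formula up to elementary factors; here integrality of $\mu$ is used again. For $S$, the plan is to use Faddeev's quantum dilogarithm, i.e.\ the double sine $S_2(z\,|\,1,\tau)=\Gamma_2(z|1,\tau)\Gamma_2(1+\tau-z|1,\tau)^{\pm1}$. Its modular property
\[
\frac{(yq;q)_\infty}{(\tilde y\tilde q;\tilde q)_\infty}=\exp\Bigl(\text{quadratic polynomial in } z/\tau\Bigr)\cdot \Phi_\tau(z)
\]
relates $\phi(z,\tau)$ and $\phi(z/\tau,-1/\tau)$ with $\tilde q=e^{-2\pi\ri/\tau}$. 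The cocycle $h_S(z,\tau)=\phi(z,\tau)-\phi(z/\tau,-1/\tau)-(\text{elementary})$ extends holomorphically to $\BC\times(\BC\setminus\BR_{\le0})$, because $\Phi_\tau$ is meromorphic in $\tau$ on the cut plane. Equivalently, in our language this is the statement that $\log\Gamma_2$ itself is analytic across the cut. Since $S$ and $T$ generate $\mathrm{SL}_2(\BZ)$, and the cocycle condition then gives the analytic extension for all $\gamma$, this proves the theorem.

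The main obstacle is bookkeeping in the residue sum. We must check that the $\zeta=2\pi\ri k\beta^{\mp1/2}$ contributions combine exactly into $q$-products, including the double pole when the two lattices overlap, which is avoided for $\beta\in\BH$. We must also check that the elementary pieces match the weight-zero Jacobi index from the quadratic exponent. I would first do the case $\mu=0$, comparing with \cite{AMP23}, and then treat general integer $\mu$ as a shift $z\mapsto z+\tfrac{1-\mu}{2}(\tau\pm1)$-type, which lands in the Jacobi lattice only for integer $\mu$.
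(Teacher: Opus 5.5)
Your plan is essentially the paper's proof. The discontinuities are read off from the Stokes constants as logarithms of $q$- (resp.\ $\tilde q$-) Pochhammer symbols. The $T$- and elliptic-shift cocycles come from the product structure: the $T$-cocycle is trivial and the shift cocycles are finite Pochhammer symbols. The $S$-cocycle is written through Faddeev's quantum dilogarithm $\Phi_{\beta^{1/2}}$, which is analytic for $\beta\in\BC\setminus\BR_{\le 0}$, and the $\tilde q$-rays are treated as $S$-images of the $q$-rays.

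One correction. Your ray labels are swapped: the $q$-series lives on $\theta_+$, not $\theta_-$. More importantly, the Pochhammer form of the discontinuities holds for every $\mu$, so integrality of $\mu$ is not used in the residue sum. It is used in the $S$-step. There, matching the $q$-side and $\tilde q$-side factors of a single $\Phi_{\beta^{1/2}}$ to $e^{\pi\ri(1-\mu)}q^{(1+\mu)/2}$ and $e^{\pi\ri(1-\mu)}\tilde q^{(1+\mu)/2}$ leaves a correction that is a finite Pochhammer symbol only when $\mu\in\BZ$. That is what your elliptic-shift reduction to a base case is really checking.
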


The strategy of the proof can be outlined as follows. First, we explicitly obtain that each discontinuity ${\rm disc}_\theta[F^{\rm st}]$ admits a $q$-series where $q=\re^{2\pi\ri\beta}$ (or $\tilde q$-series where $\tilde q=\re^{-2\pi\ri/\beta}$) which guarantees the invariance under the $T$-action $\beta\mapsto\beta+1$ (pre-composed by the $S$-action $\beta\mapsto-\beta^{-1}$ for the $\tilde{q}$-series). Next, we find that the $S$-action $\beta\mapsto-\beta^{-1}$ brings the discontinuity along $\theta_+$ to the minus of that along $\theta_-$. In other words, the $S$-cocycle of the discontinuity along $\theta_+$ is the sum of the two discontinuities along $\theta_+$ and $\theta_-$. Finally, we show that the sum has a good analytic behaviour when $\mu\in\BZ$, in particular, it is partly written in terms of the Faddeev's quantum dilogarithm $\Phi_{\beta^{1/2}}$.

Our proof exhibits a somewhat structural property, i.e. existence of a $q$-series and the $S$-cocycle as the sum of the two discontinuities split due to $\beta$. Therefore, it is natural to expect that the quantum modularity with respect to the refinement parameter $\beta$ may hold beyond the Weber-type refined spectral curve. The explicit formula for the resolved conifold (see Section \ref{sec:conclusion}) and the discussions about conifold frames in \cite{AMP23} support this expectation, and we hope to return to this idea with more nontrivial examples in the near future.

\bigskip

This paper is organised as follows. In Section \ref{sec:RTR}, we briefly introduce refined topological recursion and discuss necessary properties. We then review basic notions of resurgence theory in Section \ref{sec:resurgence} and prove Theorem~\ref{thm-1-intro}. We turn to quantum modularity in Section \ref{sec:modularity}, show our main result Theorem~\ref{thm-2-intro}, and conclude with comments for future work.

\subsection*{Acknowledgements}
The authors would like to thank T. Bridgeland, A. Brini, A. Grassi, K. Iwaki, and O. Kidwai for helpful comments and valuable discussions. We also thank Nagoya University for their hospitality at which this project started.

K. O. acknowledges the support by JSPS KAKENHI Grant-in-Aid for Early-Career Scientists (23K12968, 26K16980), and in part by 24K00525. K. O. also acknowledges the support from the Kobayashi–Maskawa Institute (KMI) for the Origin of Particles and the Universe at Nagoya University.

\bigskip

\section{Refined topological recursion}\label{sec:RTR}

Let us first introduce some aspects of recent progress on refined topological recursion. Our purposes are not to give a thorough introduction but rather collect results necessary for our main claim. Thus, we will be very brief and refer the readers to \cite{KO22,Osu23-1,Osu23-2} for more details and also \cite{CE06} for initial attempts.

\subsection{Refined spectral curves}

Although refined topological recursion can be considered for any degree-two curves, our primary interests are in the model built on the so-called Weber curve, one-parameter family of the following algebraic curves:
\be\label{eq:weber}
\calC=\{x,y\in\BC,\;\;\;t\in\BC^*\,|\,y^2-\tfrac{1}{4}x^2+t=0\}
\ee
Strictly speaking, we consider its compactification $\overline{\calC}$ which, for fixed $t$, is isomorphic to $\BP^1$. For brevity of notation, however, we drop the bar and simply denote by ${\calC}$ its compactification as this difference plays no role in the present paper.

The parameter $t$ can be geometrically realised as follows. Let $\calP\subset{\calC}$ be the set of poles of $ydx$ which consists of two points. Then, it is easy to show that 
\begin{equation}\Res{p\in\calP}\, ydx =\pm t \end{equation}
We write $\calP=\{p_+,p_-\}$ where the subscript denotes the sign of the above residue. Let us also set $\calR\coloneqq\{p\in{\calC}\,|\,dx(p)=0\}$ the set of ramification points of $x:\calC\to\BP^1$.

We next consider a refined spectral curve which is a set of initial data required to apply refined topological recursion. Since we only focus on a specific model, the definition can be simplified as below --- it is not written exactly in the same manner as \cite{KO22,Osu23-1} but it can be shown to be equivalent for the Weber curve:
\begin{definition}
The Weber-type \emph{refined spectral curve} $\calS^\mu=({\calC},\omega_{0,1},\omega_{0,2},\omega_{\frac12,1})$ consists of the following data:
\begin{itemize}
\item ${\calC}$ as above,
\item $\omega_{0,1}\coloneqq ydx$
\item $\omega_{0,2}$ is the fundamental bi-differential on ${\calC}^2$
\item $\omega_{\frac12,1}$ is a unique differential on ${\calC}$ whose only poles are residues\footnote{The sign convention for $\mu$ is the same as \cite{CDO24-2,CGGO26} but the opposite of \cite{KO23}.} determined as below:
\begin{equation}\Res{p\in\calR}\omega_{\frac12,1}=-\frac{\mathfrak{b}}{2},\quad \Res{p=p_\pm}\omega_{\frac12,1}=\frac{\mathfrak{b}(1\mp\mu)}{2}\end{equation}
where $\mathfrak{b}\in\BC$ is called the \emph{refinement parameter} and $\mu\in\BC$ is another parameter. 
\end{itemize}
\end{definition}

\begin{remark}
We will consider three parameters $\mathfrak{b},\beta,b$ which are mutually related by
\begin{equation}\mathfrak{b}=\beta^{\frac12}-\beta^{-\frac12},\quad b =\beta^{-1}-1\,,\end{equation}
As shown in \cite{CDO24-1,CDO24-2}, the central charge $c$ of the associated Virasoro algebra is given by $c=1-6\mathfrak{b}^2$. Also, $\beta$ coincides with that of the $\beta$-deformed Hermitian matrix models, and it is related to refined topological string. Furthermore, yet another parameter $b$ admits a combinatorial interpretation via measure of non-orientability \cite{CD20,CGGO26}. 
\end{remark}

\subsection{Free energies}
Starting with the refined spectral curve $\calS^\mu$, \emph{refined topological recursion} is a unique construction of a sequence of symmetric multi-differentials $\omega_{g,n}$, i.e. meromorphic sections of $K_{{\calC}}^{\boxtimes n}$, labeled by $n\in\BZ_{\geq1}$ and $g\in\frac12\BZ_{\geq0}$ for $2g-2+n>0$. The defining formula for $\omega_{g,n}$ has a pants-decomposition structure, and it is recursive in $2g-2+n$. Although the recursive formula itself exhibits an important structure, we omit to write it down explicitly as it does not play any role here. See \cite{KO22,Osu23-1,Osu23-2} for a precise definition.

Suppose the sequence $\omega_{g,n}$ on the refined spectral curve $\calS^\mu$ is obtained by refined topological recursion. One can then define the \emph{genus $g$ free energy} $F_g$ for $g\in\frac12\BZ_{\geq0}$ which can be thought of as $\omega_{g,0}$. For the Weber-type refined spectral curve, its defining equation for $g\geq\frac32$ is given as below
\begin{equation}F_g\coloneqq\frac{1}{2-2g}\sum_{r\in\calR}\underset{p=r}{\text{Res}}\,\,\omega_{g,1}(p)\int^p\omega_{0,1}.
\end{equation}
The constant of integration does not matter because $\omega_{g,n}$ for $2g-2+n>0$ have no residues as proved in \cite{KO22}. The definitions of $F_g$ with $g\in\{0,\frac12,1\}$ are given based on the so-called variational formula \cite{Osu23-2}.

A key result proven in \cite{KO23} is that  $F_g$ enjoys a closed formula in terms of double Bernoulli polynomials $ B_{2,k}(x|a_1,a_2)$ where they are defined by the following expansion
\begin{equation}
\frac{z^2\re^{x z}}{(\re^{a_1 z}-1)(\re^{a_2 z}-1)}\eqqcolon\sum_{k\in\BZ_{\geq0}} B_{2,k}(x|a_1,a_2)\frac{z^k}{k!}.\label{def of doubleB}
\end{equation}
Throughout the paper, we set $a_1=\beta^{\frac12}$ and $a_2=-\beta^{-\frac12}$. Thus, for brevity of notation, we will write as below 
\begin{equation}\label{eq:sf_Bernoulli}
  \mathsf  B_{2,k}(x)\coloneqq  B_{2,k}(x|a_1,a_2)
\end{equation}

\begin{theorem}[\cite{KO23}]
The genus $g$ free energy $F_g$ of the Weber-type refined spectral curve $\calS^\mu$ is given as below:
\begin{equation}F_0=\frac12t^2\log t-\frac{3}{4}t^2,\quad F_\frac12=-\frac{\mathfrak{b}}{2}\mu\, t\log t+\frac12\mathfrak{b}\mu t,\quad F_1=-\frac{2+(1-3\mu^2)\mathfrak{b}^2}{24}\log t,\end{equation}
and for all $g\geq\frac32$,
\begin{equation}
F_g=\frac{(-1)^{2g-2}\mathsf B_{2,2g}\left(\frac{\mathfrak{b}}{2}(1-\mu)\right)}{2g(2g-1)(2g-2)} t^{2-2g}\label{RTRF_g}
\end{equation}
\end{theorem}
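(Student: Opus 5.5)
The plan is to combine a homogeneity argument, which fixes the $t$-dependence of each $F_g$, with the variational formula of \cite{Osu23-2}, which turns the problem of finding the coefficients into an asymptotic computation for an explicit ODE.

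\textbf{Homogeneity.} Under the rescaling $(x,y,t)\mapsto(\lambda x,\lambda y,\lambda^2 t)$, the curve \eqref{eq:weber} is preserved. The differential $\omega_{0,1}=ydx$ picks up a factor $\lambda^2$, while $\omega_{0,2}$ and $\omega_{\frac12,1}$ are invariant, since their residues are $t$-independent. The recursion kernel carries $\omega_{0,1}^{-1}$, so induction on $2g-2+n$ shows that $\omega_{g,n}$ is homogeneous of weight $\lambda^{2-2g}$. The residue formula defining $F_g$ then gives $F_g=c_g(\mu,\mathfrak b)\,t^{2-2g}$ for $g\geq\frac32$. The whole problem therefore reduces to identifying the constants $c_g$ and the low-genus terms.

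\textbf{Variational formula and the quantum curve.} The derivative $\partial_t\omega_{0,1}$ is the normalised third-kind differential with simple poles at $p_\pm$ of residue $\pm1$. The variational formula gives $\partial_t F_g=\int_{p_-}^{p_+}\omega_{g,1}$, suitably regularised at $p_\pm$, and similarly $\partial_t\omega_{g,n}=\int_{p_-}^{p_+}\omega_{g,n+1}$. To evaluate these integrals in closed form, I would use the refined quantum curve for $\calS^\mu$ (as in \cite{KO22}). This is a parabolic-cylinder (Weber) equation
\[
\Bigl(\epsilon_1\epsilon_2\,\partial_x^2-\tfrac14x^2+t+\text{(shift depending on } \mathfrak b,\mu)\Bigr)\psi=0,
\]
whose WKB wave function $\psi=\exp\bigl(\sum\hbar^{2g-1}\int\omega_{g,1}\bigr)$ is solved by parabolic cylinder functions. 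The integral of the WKB series from $p_-$ to $p_+$ is then read off from the connection and asymptotic data of $D_\nu(x)$ at $x\to\pm\infty$. This data is a ratio of Gamma functions $\Gamma\bigl(\tfrac{t}{\epsilon_1}+\cdots\bigr)$, equivalently a difference equation of the form
\[
F(t+\epsilon_1)-F(t)=\log\Gamma\Bigl(\frac{t}{\epsilon_2}+\cdots\Bigr)+\cdots
\]
for the total free energy $F=\sum\hbar^{2g-2}F_g$.

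\textbf{Extracting the coefficients.} The unique solution of this difference equation that is homogeneous in $t$, up to the ambiguity fixed by homogeneity, is $\log\Gamma_2\bigl(\frac t\hbar+\frac{\mathfrak b}{2}(1-\mu)\,\big|\,\beta^{1/2},-\beta^{-1/2}\bigr)$. Its Stirling-type asymptotic expansion has coefficients $\mathsf B_{2,2g}\bigl(\frac{\mathfrak b}2(1-\mu)\bigr)/\bigl(2g(2g-1)(2g-2)\bigr)$, which follows from the generating function \eqref{def of doubleB} via a Mellin/Binet integral representation. Comparing the expansions yields \eqref{RTRF_g}. The terms $F_0,F_{\frac12},F_1$ are the non-stable parts of the same expansion, i.e.\ the $t^2\log t$, $t\log t$ and $\log t$ terms together with their polynomial companions. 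I would check these directly against the variational definitions, using $\mathsf B_{2,0},\mathsf B_{2,1},\mathsf B_{2,2}$.

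\textbf{Main obstacle.} I expect the hardest step to be the regularisation of $\int_{p_-}^{p_+}\omega_{g,1}$ at the poles $p_\pm$, where $\omega_{\frac12,1}$ has residues depending on $\mu$. The job is to match it precisely with the normalisation of the parabolic cylinder asymptotics, and any error there shifts the argument $\frac{\mathfrak b}2(1-\mu)$. I would first fix these constants by computing $F_{\frac32}$ and $F_2$ directly from the residue formula and comparing with the double Bernoulli prediction.
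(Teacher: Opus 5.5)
\textbf{The gap is the derivation of the difference equation, which carries all the content and which, as you describe it, does not work.} The paper gives no proof of this statement. It imports it from \cite{KO23}, whose strategy you sketch in broad strokes: the refined quantum curve of \cite{KO22}, Voros coefficients, the variational formula of \cite{Osu23-2}, and a difference equation solved using homogeneity.

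\textbf{What is fine.} Your homogeneity step is right in substance, but the weight is misstated. Under $(x,y,t)\mapsto(\lambda x,\lambda y,\lambda^2 t)$ one gets $\omega_{g,n}\mapsto\lambda^{2(2-2g-n)}\omega_{g,n}$. Your $\lambda^{2-2g}$ contradicts the invariance of $\omega_{0,2}$ and $\omega_{\frac12,1}$ that you noted, and would give $F_g\propto t^{1-g}$. The correct weight is compatible with the $\omega_{\frac12,1}$ and $\mathfrak b\,dx\,d(\cdot/dx)$ terms of the refined recursion, and it gives $F_g=c_g t^{2-2g}$. The final step is also sound. At order $\hbar^{2g-1}t^{1-2g}$, the coefficient of $c_g$ in $F(t+\kappa)-F(t)$ is $(2-2g)\kappa/\hbar\neq0$ for $g\neq1$, and $\kappa/\hbar$ times the $\log t$ coefficient for $g=1$. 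So a difference equation with known right-hand side determines every coefficient recursively, and the formal solution is the asymptotic expansion of $\log\Gamma_2$.

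\textbf{The gap.} There are two problems with the middle step.

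(i) The wave function is wrong. $\exp\big(\sum\hbar^{2g-1}\int\omega_{g,1}\big)$ is not annihilated by a quantum curve. You need every $\omega_{g,n}$ integrated $n$ times, $\psi=\exp\big(\sum_{g,n}\frac{\hbar^{2g-2}\kappa^n}{n!}\int\!\cdots\!\int\omega_{g,n}\big)$ for some charge $\kappa$. It is exactly these $n$-fold integrals over $[p_+]-[p_-]$, combined with the iterated variational formula $\partial_t^nF_g=\int_{p_-}^{p_+}\!\cdots\!\int_{p_-}^{p_+}\omega_{g,n}$, that turn the Voros coefficient into the Taylor series of $F(t+\kappa)-F(t)$. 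From $\omega_{g,1}$ alone you only get $\partial_t F$, which the ODE does not compute.

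(ii) The shift and the Gamma data do not match. By $\Gamma_2(x+a_1|a_1,a_2)=\Gamma_2(x|a_1,a_2)/\Gamma_1(x|a_2)$, where $\Gamma_1(x|a)$ is $\Gamma(x/a)$ up to elementary factors:
the shift $t\mapsto t+\epsilon_1$ pairs with $\Gamma(t/\epsilon_2+\cdots)$;
Voros data $\Gamma(t/\epsilon_1+\cdots)$, i.e.\ a Weber equation with Planck constant $\epsilon_1$, pairs with $t\mapsto t-\epsilon_2$.
You assert $\Gamma(t/\epsilon_1)$ data together with an $\epsilon_1$-shift. Moreover, your displayed operator $\epsilon_1\epsilon_2\partial_x^2=-\hbar^2\partial_x^2$ is a Weber equation with Planck constant $\hbar$. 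Its Voros coefficient is a Stirling series in $t/\hbar$, and with $\kappa=\hbar$ that leads to a Barnes $G$-type answer with periods $(1,1)$, not $\Gamma_2(\,\cdot\,|\beta^{1/2},-\beta^{-1/2})$.

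\textbf{What remains to be done.} You must find the charge for which the refined wave function of $\calS^\mu$ satisfies a second-order Weber-type equation, and derive that equation with its $\mu$-dependent constant term. You must also compute the regularised contributions of $\omega_{0,1},\omega_{0,2},\omega_{\frac12,1}$ at $p_\pm$ that produce the offset $\frac{\mathfrak b}{2}(1-\mu)$. This is the content of \cite{KO22,KO23}. Computing $F_{\frac32}$ and $F_2$ by hand can fix finitely many constants only once that structure is established; as it stands it is a consistency check, not a proof.
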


Here and hereafter, $F_g$ denotes the above specific one, and any other free energies would be denoted with other indices or labels. 
For $g\in\{0,\frac12,1\}$, $F_g$ has a polynomial ambiguity of degree $2-2g$ as stated in \cite{Osu23-2}. We further utilise this ambiguity so that we can package them into the following simple form which we call the \emph{unstable part} of the free energy:
\begin{align}F^{\rm unst}(\hbar/t;\beta)\coloneqq&\frac{F_0}{\hbar^2}+\frac{F_\frac12}{\hbar}+F_1+\tfrac{1}{2} {\sf B}_{2,2}(\tfrac{t}{\hbar}+\tfrac{\mathfrak{b}}{2}(1-\mu))\log\hbar\nonumber\\
=&-\tfrac{1}{2} {\sf B}_{2,2}(\tfrac{t}{\hbar}+\tfrac{\mathfrak{b}}{2}(1-\mu))\log\tfrac{t}{\hbar}+\tfrac{3}{4}\big(\tfrac{t}{\hbar}\big)^2\mathsf B_{2,0}(\tfrac{t}{\hbar}+\tfrac{\mathfrak{b}}{2}(1-\mu))+\tfrac{t}{\hbar}{\sf B}_{2,1}(\tfrac{\mathfrak{b}}{2}(1-\mu)).\label{F^unst}
\end{align}
Note that the last term with $\log\hbar$ is added so that $F^{\rm unst}$ only depends on $\hbar/t$ and $\beta$. 

In this letter, our primary interests are in stable terms. More concretely, in contrast to $F^{\rm unst}$, we define the \emph{stable part} of the free energy $F^{\rm st}$ as below
\begin{equation}\label{eq:F_stable}
F^{\rm st}(\hbar/t;\beta)\coloneqq\sum_{g\geq\frac32}F_g\,\hbar^{2g-2}.
\end{equation}
Since $t$ only appears as $t^{2-2g}$ in each $F_g$ for $g\geq\frac32$, it follows that $F^{\rm st}\in\BQ[\mu,\beta^{\pm\frac12}][\![\hbar/t]\!]$. It turns out that $F^{\rm st}$ is a divergent series in $\hbar/t$ which leads us to consider its summability and discontinuities in Section \ref{sec:resurgence}. 

\begin{remark}\label{rem:t-dependence}
Since $F^{\rm st}$ and $F^{\rm unst}$ both depend only on $\hbar/t$ and $\beta$, one can set $t=1$ without loss of generality, and the $t$-dependence can be always restored by rescaling $\hbar$. Note that the $t$-dependence sometimes becomes useful and meaningful (e.g. \cite{KO23,CGGO26}).
\end{remark}

\subsection{Applications}

The parameter $\mu$ in the refined spectral curve $\calS^\mu$ was first introduced in \cite{KO22} merely as extra degrees of freedom hidden in refined topological recursion. It is then noticed in \cite{KO23} that it can be interpreted as a \emph{quantum BPS central charge} (different from the Virasoro central charge) in line with refined BPS structures \cite{BBS19}. It turns out that by setting $\mu$ to certain values, $F_g$ admits different enumerative geometric interpretations.

\subsubsection{$\mu=0$: conifold limit of refined topological string/refined Gromov--Witten}\label{ssub:mu_0}

Let $X$ be a Calabi-Yau threefold whose $\overline{\calM}_g(X,\beta)$ is actually compact, and consider a Calabi--Yau fivefold $X\times\BC^2$. There is a natural toric action $T$ on $\BC^2$ whose coordinates are denoted by $q_i=\re^{\epsilon_i}$ for $i\in\{1,2\}$. The $T$-equivariant Gromov--Witten theory of the fivefold $X\times\BC^2$ is a mathematical model of refined topological string theory \cite{IKV09,BS24}, and the equivariant parameters $\epsilon_1,\epsilon_2$ are related to $\hbar,\beta$ by
\begin{equation}\label{eq:epsilon}
  \epsilon_1=\beta^{1/2}\hbar,\quad \epsilon_2=-\beta^{-1/2}\hbar.
\end{equation}
See \cite{HK10,KW10,AMP23} and references therein for several computations.

Let us denote by $F^{\rm TS}_g$ the genus $g$ refined Gromov--Witten invariant, equivalently, genus $g$ refined topological string free energy. In the so-called conifold limit which is characterised by a certain coordinate $t_c\to0$ in the corresponding K\"{a}hler moduli space, $F^{\rm TS}_g$ for $g\in\BZ_{\geq2}$ is generically expected to behave as 
\begin{equation}F^{\rm TS}_g=\frac{c_g(\beta)}{t_c^{2g-2}}+O(t_c)\end{equation}
where $c_g(\beta)$ is expressed in terms of Bernoulli numbers $B_k$ by
\begin{equation}c_g(\beta)=-(2g-3)!\sum_{m=0}^g(2^{1-2m}-1)(2^{1-2g+2m}-1)\frac{B_{2m}}{2m!}\frac{B_{2g-2m}}{(2g-2m)!}\beta^{2m-g}.\label{cg(beta)}\end{equation}

Notice that $F^{\rm TS}_g$ is labelled by integer $g$, not half-integer unlike $F_g$ of refined topological recursion. Despite these differences, their precise relation is summarised as below --- this is perhaps expected from the matrix model perspective (e.g. \cite{IO10} discusses a necessity of a shift between $\beta$-deformed matrix model partition functions and Nekrasov partition functions.), but not clearly and explicitly stated anywhere to the best of author's knowledge, so we will give a self-contained proof:

\begin{prop}\label{prop:RTR-AMP}
For all $g\in\BZ_{\geq2}$, we have:
\begin{equation}F_{g-\frac12}\big|_{\mu=0}=0,\quad F_g\big|_{\mu=0} = \frac{c_g(\beta)}{t^{2g-2}}.\end{equation}
\end{prop}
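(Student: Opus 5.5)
The plan is to substitute $\mu=0$ into the closed formula \eqref{RTRF_g} from \cite{KO23} and compute the relevant double Bernoulli polynomials directly from their generating function \eqref{def of doubleB}, with $a_1=\beta^{1/2}$ and $a_2=-\beta^{-1/2}$. At $\mu=0$ the argument becomes $x=\frac{\mathfrak b}{2}=\frac{a_1+a_2}{2}$, which is the symmetric point. Multiplying the numerator and denominator of \eqref{def of doubleB} by $\re^{-(a_1+a_2)z/2}$ gives
\begin{equation*}
\sum_{k\geq0}\mathsf B_{2,k}\big(\tfrac{\mathfrak b}{2}\big)\frac{z^k}{k!}=\frac{z^2}{4\sinh(a_1z/2)\sinh(a_2z/2)}.
\end{equation*}
The right-hand side is even in $z$, so $\mathsf B_{2,2g-1}(\frac{\mathfrak b}{2})=0$ for all $g$. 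Since $F_{g-\frac12}$ is proportional to $\mathsf B_{2,2g-1}$ by \eqref{RTRF_g}, this gives the first claim $F_{g-\frac12}|_{\mu=0}=0$.

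For the second claim, I will use the classical expansion
\begin{equation*}
\frac{w}{2\sinh(w/2)}=\sum_{m\geq0}(2^{1-2m}-1)\frac{B_{2m}}{(2m)!}w^{2m}.
\end{equation*}
It follows from $\frac{1}{2\sinh(w/2)}=\frac{1}{\re^{w}-1}\cdot\re^{w/2}$, or equivalently from $\frac{1}{\sinh}=\coth(\cdot/2)-\coth$ together with the Bernoulli expansion of $\coth$. Apply it once with $w=a_1z$ and once with $w=a_2z$, and divide by $a_1a_2z^2$. Using $a_1a_2=-1$, $a_1^{2m}=\beta^m$ and $a_2^{2n}=\beta^{-n}$, this yields
\begin{equation*}
\frac{\mathsf B_{2,2g}(\frac{\mathfrak b}{2})}{(2g)!}=-\sum_{m=0}^{g}(2^{1-2m}-1)(2^{1-2g+2m}-1)\frac{B_{2m}}{(2m)!}\frac{B_{2g-2m}}{(2g-2m)!}\beta^{2m-g}.
\end{equation*}

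Finally, I will insert this into \eqref{RTRF_g}. The sign there is $(-1)^{2g-2}=1$, and
\begin{equation*}
\frac{(2g)!}{2g(2g-1)(2g-2)}=(2g-3)!.
\end{equation*}
This reproduces exactly $c_g(\beta)\,t^{2-2g}$ as in \eqref{cg(beta)}.

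The only step that needs real care is the $\sinh$ expansion: its normalisation and the sign factor $(2^{1-2m}-1)$ must be right. I would check it on low-order terms, namely the constant term $1$ and the next coefficient $-\frac{1}{24}w^2$, and confirm that the overall sign coming from $a_1a_2=-1$ matches the leading minus sign in $c_g(\beta)$. The remaining steps are bookkeeping.
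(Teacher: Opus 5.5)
Your proof is correct and follows essentially the same route as the paper. The vanishing of $F_{g-\frac12}|_{\mu=0}$ is the reflection symmetry $\mathsf B_{2,k}(x)=(-1)^k\mathsf B_{2,k}(a_1+a_2-x)$ at its fixed point $x=\frac{\mathfrak b}{2}$, which is your evenness of the $\sinh$ form, and your expansion of $\frac{w}{2\sinh(w/2)}$ is exactly the paper's use of $\frac{w\re^{w/2}}{\re^w-1}=\sum_k B_k(\frac12)\frac{w^k}{k!}$ with $B_k(\frac12)=(2^{1-k}-1)B_k$. The only slip is cosmetic: the product of the two expansions should be divided by $a_1a_2=-1$, not by $a_1a_2z^2$, and dividing by $a_1a_2$ alone is what your displayed coefficient formula actually uses.
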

\begin{proof}
The first equation is implied by the following identity when $\mu=0$
\begin{equation}B_{2,k}(x|a_1,a_2)=(-1)^k B_{2,k}(a_1+a_2-x|a_1,a_2).\label{Bsign}\end{equation}
Let us next recall the definitions of Bernoulli numbers $B_k$ and Bernoulli polynomials $B_k(a)$
\begin{equation}\frac{z}{\re^z-1}=\sum_{k\geq0}B_k\frac{z^k}{k!},\qquad \frac{z \re^{x z}}{\re^z-1}=\sum_{k\geq0}B_k(x)\frac{z^k}{k!}\end{equation}
They satisfy several identities among themselves, but the most useful ones for us are
\begin{equation}
B_k\big(\tfrac12\big)=(2^{1-k}-1)B_k,\quad \forall\, \text{ odd }k\;\;\;\;B_k=0.\label{B identity2}
\end{equation}
which implies that $c_g(\beta)$ can be written as
\begin{equation}c_g(\beta)=-(2g-3)!\sum_{m=0}^g\frac{B_{2m}\left(\frac12\right)}{(2m)!}\frac{B_{2g-2m}\left(\frac12\right)}{(2g-2m)!}\beta^{2m-g}\end{equation}
Now, from the definition of double Bernoulli polynomials \eqref{def of doubleB}, we find
\begin{align}
\sum_{k\in\BZ_{\geq0}}\mathsf B_{2,k}\big(\tfrac{\mathfrak{b}}{2}\,\big)\frac{z^k}{k!}=&\frac{z^2\re^{\frac12\beta^{\frac12}z}\re^{-\frac12\beta^{-\frac12}z}}{(\re^{\beta^{\frac12}z}-1)(\re^{-\beta^{-\frac12}z}-1)}\nonumber\\
=&-\sum_{k_1,k_2\geq0}B_{k_1}\big(\tfrac12\big)\frac{z^{k_1}\beta^{\frac{k_1}{2}}}{k_1!}B_{k_2}\big(\tfrac12\big)\frac{(-1)^{k_2}z^{k_2}\beta^{-\frac{k_2}{2}}}{k_2!}\nonumber\\
=&-\sum_{k\geq0}z^{2k}\sum_{m=0}^{k}\frac{B_{2m}\left(\frac12\right)}{(2m)!}\frac{B_{2k-2m}\left(\frac12\right)}{(2k-2m)!}\beta^{2m-k}
\end{align}
where at the first equality we substitute the definition of Bernoulli polynomials $B_k(a)$ for $a=\frac12$ after rescaling $z\mapsto \pm\beta^{\pm\frac12}z$, the second equality holds thanks to \eqref{B identity2}, and the third equality is merely a repackaging of summation by the fact that $B_k(a)=0$ for odd $k$.
\end{proof}

\subsubsection{$\mu=-1$: refined Euler characteristic of moduli space of Klein surfaces}

 It is shown in \cite{CDO24-1,CDO24-2} that when $\mu=-1$, correlators $\omega_{g,n}$ on the Weber-type refined spectral curve generate correlators of the $\beta$-deformed Gaussian matrix model which correspond to certain weighted $b$-Hurwitz numbers \cite{CD20}. Furthermore, it is recently proved in \cite{CGGO26} that $F_g$ is related to the \emph{refined Euler characteristic} $\chi_b$ of the moduli space $\calN_{g,n}(L)$ of metric M\"{o}bius graphs, equipped with the notion of \emph{measure of non-orientability} $\rho_b:\calN_{g,n}(L)\to\BR[b]$. The parameter $b$ is related to $\beta$ by
 \begin{equation}
   \beta=\frac{1}{1+b}.
 \end{equation}

 The moduli space admits a decomposition $\calN_{g,n}(L)=\calN_{g,n}^+(L)\sqcup\calN_{g,n}^-(L)$ where $\calN_{g,n}^+$ (resp. $\calN_{g,n}^-$) denotes that underlying graphs are orientable (resp. non-orientable). In particular, when $b=0$, $\rho_{0}$ is the indicator function of orientability, i.e., $\rho_{0}=1$ on $\calN_{g,n}^+(L)$ and $\rho_{0}=0$ otherwise. When $b=1$, $\rho_{1}=1$ throughout the moduli space $\calN_{g,n}(L)$. For a generic value of $b$, the function $\rho_b$ measures how ``non-orientable'' a point in the moduli space $\calN_{g,n}(L)$ is.

 The refined Euler characteristic $\chi_b(\calN_{g,n}(L))$ of $\calN_{g,n}(L)$ is defined as the sum of $(-1)^{|E|}$ over the underlying graphs of type $(g,n)$ with $E$ edges, weighted not only by the order of the automorphism, but also by the (average of the) measure of non-orientability. Since $\calN^\pm_{g,n}(L)$ is, respectively, isomorphic to the moduli space $\calM_{g,n}$ of Riemann surfaces and the moduli space $\calK_{g,n}$ of Klein surfaces without fixed point loci (up to powers of $\BZ_2$), the refined Euler characteristic continuously interpolates Euler characteristics of two moduli spaces $\calM_{g,n}$ and $\calK_{g,n}$. See \cite{CGGO26} for more details about the interpretation of $F_g$ in this aspect.

\section{Resurgence and summability}\label{sec:resurgence}

There are two closely related notions: resurgence and Borel-summability. We begin by reviewing basic concepts, and then consider how $F^{\rm st}\in\hbar\BC[\![\hbar/t]\!]$ in ~\eqref{eq:F_stable} sits in these aspects. We refer to~\cite{dorigoni,lectures-marino} for more detailed reviews on the subject.

\subsection{Basics}

Let us denote by $\calB\colon\hbar\BC[\![\hbar]\!]\mapsto\BC[\![\zeta]\!]$ the \emph{Borel transform} that is defined as the formal inverse of the Laplace transform 
\be \label{def:laplace}
\calL^\theta[\phi](\hbar)=\int_{0}^{\re^{i\theta}\infty}\re^{-\zeta/\hbar} \phi(\zeta)\, d\zeta\,.
\ee
More precisely, 
\be\label{def:borel}
\calB\big[\hbar^{n+1}\big]\:=\frac{\zeta^n}{n!}\,,
\ee
and it extends by countably linearity over $\hbar\BC[\![\hbar]\!]$. A formal series $f\in\hbar\BC[\![\hbar]\!]$ is called \emph{Gevrey-$1$} if and only if its Borel transform $\calB[f]$ has finite radius of convergence, i.e. $\calB[f]\in\BC\{\zeta\}$.

Following Écalle's original definition \cite{EcalleI}, a germ $\phi\in\BC\{\zeta\}$ is said to admit {\em endless analytic continuation} when, for every $L > 0$, there is a finite set of points $\Omega_L$ such that $\phi$ can be analytically continued along any smooth path of length at most $L$ that avoids $\Omega_L$ and starts from a fixed point in a neighborhood of the origin. The set $\Omega=\bigcup_{L>0}\Omega_L$ is the set of singularities. A Gevrey-$1$ series $f\in\hbar\BC[\![\hbar]\!]$ is said to be \emph{resurgent} if its Borel transform $\calB[f]\in\BC\{\zeta\}$ can be endlessly analytically continued. 
If, additionally, $\calB[f]$ has only simple poles and logarithmic branch points, then $f$ is said to be {\em simple resurgent}.

If $\calB[f]$ can be analytically continued in a neigbourhood of the ray $(0,\re^{i\theta} \infty )$ and it is of exponential type $\Lambda$,\footnote{Recall that a function $\phi$ is of exponential type $\Lambda$ if for every $\varepsilon>0$, there is a constant $A_\varepsilon$ (which may depends on $\varepsilon$) such that $|\phi(\zeta)|\le A_\varepsilon \re^{(\Lambda+\varepsilon)|\zeta|}$.} then its Laplace transform $\calL^\theta[\calB[f]]$ in direction $\theta$ is analytic for $\Re(\re^{i\theta} /\hbar)>\Lambda$. Such a Gevrey-1 series $f$ is called \emph{Borel-summable} in direction $\theta$ and the outcome $s^\theta[f]\coloneqq\calL^\theta[\calB[f]]$ is called the \emph{Borel--Laplace sum of $f$}. So in particular, a resurgent series $f\in\hbar\BC[\![\hbar]\!]$ is Borel-summable in direction $\theta$, if $\calB[f]$ has no singularity in the $\theta$-direction, and it is of exponential type.

\bigskip

For a resurgent series $f\in\hbar\BC[\![\hbar]\!]$, let $\zeta_\omega$ be a singularity of $\calB[f]$, and we set $\theta_\omega=\arg\zeta_\omega$. Clearly, $s^{\theta_\omega}[f]$ is ill-defined, and thus, there can be a discontinuity between $s^{\theta_\omega\pm\varepsilon}[f]$ for a sufficiently small $\varepsilon$ --- for brevity, we suppress $\varepsilon$ from now on. We call a ray that starts at the origin and passes through $\zeta_\omega$ a \emph{Stokes ray}, and we call each such discontinuity
\be \label{def: disc}
\mathrm{disc}_{\theta}[f] \coloneqq s^{\theta+}[f] - s^{\theta-}[f]\, , 
\ee
a \emph{non-perturbative correction} to $f$. This is because such a discontinuity comes with a factor of $\re^{-\zeta_\omega/\hbar}$ due to the Laplace kernel, hence it is exponentially suppressed (or growing) and it does not appear in an asymptotic expansion. The data of the location of each singularity and its associated non-perturbative correction forms a so-called \emph{resurgent structure} of $f$\footnote{In some literature, non-perturbative corrections in a resurgent structure are defined only formally, i.e. the Borel transform $\calB[f]$ might not be of exponential type hence the Laplace transform might not be convergent.}. 

When the singularity $\zeta_{\omega}$ is a simple pole, the local expansion of the $\calB[f](\zeta)$ at $\zeta_\omega$ has the form 
\be \label{eq: Stokes0}
\calB[f](\zeta) = - \frac{S_{\omega}}{2 \pi i (\zeta - \zeta_{\omega})} + \text{regular in $\zeta-\zeta_\omega$} \, ,
\ee
where $S_{\omega} \in \BC$ is called the \emph{Stokes constant} at $\zeta_{\omega}$. A standard contour deformation argument shows that $\mathrm{disc}_{\theta}[f]$ \eqref{def: disc} is a \emph{trans-series}, whose coeffcients are the Stokes constants times the \emph{trans-monomials} $\re^{-\zeta_{\omega}/\hbar}$:
\be \label{eq: Stokes1-poles}
\mathrm{disc}_{\theta}[f](\hbar) = \sum_{\omega  \in \Omega_{\theta}} S_{\omega} \re^{-\zeta_{\omega}/\hbar}  \, ,
\ee
where $\theta$ is direction of a Stokes ray and the index $\omega \in \Omega_{\theta}$ labels the singularities $\zeta_{\omega}$ such that $\arg (\zeta_{\omega}) = \theta$.

\subsection{Resurgence}
In this section we investigate the resurgence properties of $F^{\rm st}$ in ~\eqref{eq:F_stable}. More concretely, we would determine singularities of $\calB[F^{\rm st}]$ and compute $\text{disc}_\theta[F^{\rm st}]$ in some suitable directions $\theta$. As discussed in Remark \ref{rem:t-dependence}, we set $t=1$ here and hereafter, and the $t$-dependence will be only restored in Theorem \ref{thm:resumable} and Theorem \ref{thm:QM} for completeness.

\begin{prop}\label{prop:resurgence}
The formal series $F^{\rm st}(\hbar;\beta)\in\BQ[\mu,\beta^{\pm\frac12}][\![\hbar]\!]$ has a simple resurgent structure whose singularities are all simple poles at 
\begin{equation}\label{eq:sing}
  \zeta_k^{\pm}=2\pi\ri\beta^{\pm 1/2} k\,, \quad  k\in\BZ_{\neq 0}\,,  
\end{equation}
and Stokes constants are 
\begin{align}\label{eq:stokes}
S_{\zeta_k^+}=\frac{\re^{\pi\ri (1-\mu)k}}{k}\frac{q^{(1+\mu)k/2}}{ q^k-1}\,,\quad
S_{\zeta_k^-}=-\frac{\re^{\pi\ri (1+\mu)k}}{k}\frac{\tilde{q}^{(1+\mu)k/2}}{\tilde{q}^{k}-1},
\end{align}
where $q=\re^{2\pi\ri\beta}$ and $\tilde q = \re^{-2\pi \ri \frac{1}{\beta}}$.

\end{prop}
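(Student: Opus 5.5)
The plan is to compute the Borel transform $\calB[F^{\rm st}]$ in closed form from the generating function \eqref{def of doubleB}. Then we read off its poles and residues directly.

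\textbf{Step 1: closed form for the Borel transform.} Set $x_0=\frac{\mathfrak b}{2}(1-\mu)$ and $t=1$, and write
\begin{equation*}
G(z)=\frac{z^2\re^{x_0z}}{(\re^{a_1z}-1)(\re^{a_2z}-1)}=\sum_k \mathsf B_{2,k}(x_0)\frac{z^k}{k!}.
\end{equation*}
Since $F_g\hbar^{2g-2}$ involves $\mathsf B_{2,n}(x_0)(-1)^n/(n(n-1)(n-2))\,\hbar^{n-2}$ with $n=2g\geq3$, the definition \eqref{def:borel} gives
\begin{equation*}
\calB[F^{\rm st}](\zeta)=\sum_{n\geq3}\frac{(-1)^n\mathsf B_{2,n}(x_0)}{n!}\,\zeta^{n-3}.
\end{equation*}
This is $\zeta^{-3}$ times the Taylor tail of $G(-\zeta)$ with the $n=0,1,2$ terms removed. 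So
\begin{equation*}
\calB[F^{\rm st}](\zeta)=\zeta^{-3}\bigl(G(-\zeta)-\mathsf B_{2,0}+\mathsf B_{2,1}(x_0)\zeta-\tfrac12\mathsf B_{2,2}(x_0)\zeta^2\bigr).
\end{equation*}
This is meromorphic on $\BC$. The subtraction removes the apparent pole at $0$, and the function grows at most polynomially away from the poles. Hence the series is Gevrey-1 and endlessly continuable, with only poles.

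\textbf{Step 2: locate the poles.} $G(-\zeta)$ has poles where $\re^{-a_1\zeta}=1$ or $\re^{-a_2\zeta}=1$. With $a_1=\beta^{1/2}$ and $a_2=-\beta^{-1/2}$, these are $\zeta=2\pi\ri k\beta^{-1/2}$ and $\zeta=2\pi\ri k\beta^{1/2}$ for $k\neq0$. For $\beta\in\BH$ (or generic $\beta$) the two lattices do not intersect away from $0$, since $\beta\notin\BQ$. Therefore every pole is simple, which gives \eqref{eq:sing}.

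\textbf{Step 3: compute the residues.} At $\zeta_k^+=2\pi\ri k\beta^{1/2}$ the vanishing factor is $\re^{-a_2\zeta}-1=\re^{\beta^{-1/2}\zeta}-1$, whose derivative there is $\beta^{-1/2}$. Evaluating the remaining factors at $\zeta_k^+$ gives:
\begin{itemize}
\item $\zeta^2/\zeta^3=1/\zeta_k^+$;
\item $\re^{-x_0\zeta_k^+}=\re^{-\pi\ri k(1-\mu)(\beta-1)}$, using $\mathfrak b\beta^{1/2}=\beta-1$;
\item $(\re^{-\beta^{1/2}\zeta_k^+}-1)^{-1}=(q^{-k}-1)^{-1}$.
\end{itemize}
Multiplying,
\begin{equation*}
\Res{\zeta=\zeta_k^+}\calB=\frac{\beta^{1/2}}{2\pi\ri k\beta^{1/2}}\,\re^{\pi\ri k(1-\mu)}\,q^{-k(1-\mu)/2}\,\frac{1}{q^{-k}-1}.
\end{equation*}
Rewriting $q^{-k(1-\mu)/2}/(q^{-k}-1)=-q^{k(1+\mu)/2}/(q^k-1)$ and using $S=-2\pi\ri\,\mathrm{Res}$ from \eqref{eq: Stokes0} gives the formula for $S_{\zeta_k^+}$ in \eqref{eq:stokes}.

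The computation at $\zeta_k^-$ is symmetric, with $\beta^{1/2}\leftrightarrow-\beta^{-1/2}$. Here $\mathfrak b\beta^{-1/2}=1-\beta^{-1}$, which produces the $\tilde q$-powers, the factor $\re^{\pi\ri(1+\mu)k}$ and the overall sign.

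\textbf{Main obstacle.} The only delicate part is bookkeeping: the branch of $\beta^{1/2}$, the signs from $(-1)^n$, and which overall sign convention \eqref{eq: Stokes0} imposes. I would double-check these on a numeric example.
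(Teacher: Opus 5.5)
Your proposal is correct and follows essentially the paper's route. You write $\calB[F^{\rm st}]$ as $\zeta^{-3}$ times the Taylor tail of the double-Bernoulli generating function, read off the simple poles at $\zeta_k^\pm$, and set $S=-2\pi\ri\,\mathrm{Res}$. Working with $G(-\zeta)$ directly is equivalent to the paper's use of the reflection identity \eqref{Bsign}, since $G(-\zeta)$ at $x_0$ is exactly the generating function at $\frac{\mathfrak b}{2}(1+\mu)$.

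The only slip is your remark that the Borel transform grows at most polynomially away from the poles. In general it is only of exponential type: for instance $G(-\zeta)\approx-\zeta^2\re^{(a_2-x_0)\zeta}$ where $\Re(a_1\zeta)>0>\Re(a_2\zeta)$. This does not affect the proposition, because meromorphy on $\BC$ alone already gives Gevrey-1 and endless continuation.
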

\begin{proof}
Using \eqref{Bsign}, the Borel transform of $F^{\rm st}(\hbar;\beta)$ with respect to the variable $\hbar$ is
\begin{align}
\calB \big[F^{\rm st}\big](\zeta;\beta)=&\sum_{n\geq 1} \frac{\mathsf B_{2,n+2}\big(\frac{\mathfrak{b}}{2}(1+\mu)\big)}{(n+2)(n+1)n}\frac{\zeta^{n-1}}{(n-1)!},\\
=&\zeta^{-3}\sum_{n\geq 1} \mathsf B_{2,n+2}\big(\tfrac{\mathfrak{b}}{2}(1+\mu)\big)\frac{\zeta^{n+2}}{(n+2)!}\,
\end{align}

Then, it follows from the definition of double Bernoulli numbers in ~\eqref{def of doubleB} that
\begin{multline}\label{eq:borel_Fstable}
\calB \big[F^{\rm st}\big](\zeta;\beta)=\zeta^{-3}\Big[\frac{\zeta^2\re^{\frac{\mathfrak{b}}{2}(1+\mu)\zeta}}{(\re^{\beta^{1/2}\zeta}-1)(\re^{-\beta^{-1/2}\zeta}-1)}-\mathsf B_{2,0}\big(\tfrac{\mathfrak{b}}{2}(1+\mu)\big)\\
-\mathsf B_{2,1}\big(\tfrac{\mathfrak{b}}{2}(1+\mu)\big)\zeta-\mathsf B_{2,2}\big(\tfrac{\mathfrak{b}}{2}(1+\mu)\big)\frac{\zeta^2}{2}\Big]\,.
\end{multline}
Hence, $\calB \big[F^{\rm st}\big]$ has simple poles at $\zeta_k^\pm =2\pi\ri\beta^{\pm 1/2}\,k$ with $k\in\BZ_{\neq0}$, and no logarithmic singularities. Notice that $\zeta=0$ is not a singularity which can be checked by inspection.  

Finally, the Stokes constants are determined by taking residues at the poles, namely 
\begin{equation}
S^\pm_k=-2\pi\ri\,{\rm Res}_{\zeta=\zeta^\pm_k} \calB \big[F^{\rm st}\big](\zeta;\beta)\,,
\end{equation}  
which gives the statement. Since $\calB \big[F^{\rm st}\big]$ has no logarithmic singularities, determining Stokes constants fully describes the resurgent structure of $F^{\rm st}$.
\end{proof}

We note that when $\beta=1$, which corresponds to the unrefined limit, the ray of Borel singularities is along the imaginary axis. In the refined setting, however, the ray splits into two along $2\pi \ri \beta^{\pm1/2}$ with the opening angle of $\theta_\beta=\arg\beta$ (see Figure \ref{fig:the_borel_plane_of_f_rm_stable}), which is already observed in \cite{AMP23}. The Stokes constants $S_k^\pm$ coincide with those in~\cite{AMP23} when we set $\mu=0$ as implied by Proposition~\ref{prop:RTR-AMP}. 

\begin{figure}[ht]
\centering
\begin{tikzpicture}
\draw[thick] (-1,2)--(1,-2);
\draw[thick] (-1,-2)--(1,2);

\draw[->] (-2,0)--(2,0);
\draw[->] (0,-2.5)--(0,2.5);

\draw (63.43:0.7) arc (63.43:116.57:0.7);
\node[font=\tiny, above] at (0.2, 0.7) {$\theta_\beta$};

\foreach \x in {-0.75,-0.5,-0.25,0.25,0.5,0.75}{
        \fill (-1*\x,2*\x) circle[radius=2pt];
        \fill (-1*\x,-2*\x) circle[radius=2pt];
    }

\node[font=\tiny,below right] at (1*0.75,2*0.75) {$\zeta_k^-$};  
\node[font=\tiny,below left] at (-1*0.75,2*0.75) {$\zeta_k^+$};   
\end{tikzpicture}
\caption{The Borel plane of $F^{\rm st}(\hbar,t;\beta)$.}\label{fig:the_borel_plane_of_f_rm_stable} 
\end{figure}

Here and hereafter, we choose $\beta\in\BH$ and the branch of $\beta^\frac12$ along the negative reals which ensures $|q|,|\tilde q|<1$ and also $\text{Re}(\beta^{\pm1/2})>0$. It turns out that we can explicitly compute the discontinuities in terms of $q$-Pochhammer symbols $(a;q)_n\coloneqq\prod_{k=0}^{n-1}(1-aq^k)$. 

\begin{lemma}\label{lem:q-pochh}
Let us denote by $\theta_\pm=\arg\ri\beta^{\pm1/2}$. The discontinuities of $F^{\rm st}$ in the directions of $\theta_\pm$ and $\theta_\pm+\pi$ are given as below: 
\begin{align}
  {\rm disc}_{\theta_+}[F^{\rm st}](\hbar;\beta)=&\log(\re^{\pi\ri(1-\mu)-2\pi\ri\beta^{1/2}/\hbar}\,q^{\frac{1+\mu}{2}};q)_\infty \,,  &\Im\left(-\mu- 2\beta^{1/2}\hbar^{-1}\right)>0\,,\label{eq:beta-pos-disc}\\
  {\rm disc}_{\theta_-}[F^{\rm st}](\hbar;\beta)=&-\log(\re^{\pi\ri(1+\mu)-2\pi\ri\beta^{-1/2}/\hbar}\,\tilde{q}^{\frac{1+\mu}{2}};\tilde{q})_\infty \,,  &\Im\left(\mu- 2\beta^{-1/2}\hbar^{-1}\right)>0\,,\\
  {\rm disc}_{\theta_++\pi}[F^{\rm st}](\hbar;\beta)=&-\log(\re^{\pi\ri(1+\mu)+2\pi\ri\beta^{1/2}/\hbar}\,q^{\frac{1-\mu}{2}};q)_\infty \,,  &\Im\left(\mu+2\beta^{1/2}\hbar^{-1}\right)>0\,,\label{eq:beta-neg-disc}\\
  {\rm disc}_{\theta_-+\pi}[F^{\rm st}](\hbar;\beta)=&\log(\re^{\pi\ri(1-\mu)+2\pi\ri\beta^{-1/2}/\hbar}\,\tilde{q}^{\frac{1-\mu}{2}};\tilde{q})_\infty \,,  &\Im\left(-\mu+2\beta^{-1/2}\hbar^{-1}\right)>0\,.
\end{align}
\end{lemma}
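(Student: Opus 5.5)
The plan is to combine the general formula \eqref{eq: Stokes1-poles} with the Stokes constants of Proposition~\ref{prop:resurgence}, and then resum the resulting trans-series into a $q$-Pochhammer symbol. For a Stokes ray that carries only simple poles, the contour-deformation argument gives the discontinuity as the sum of $S_\omega\,\re^{-\zeta_\omega/\hbar}$ over all poles on the ray. This presupposes exponential type of $\calB[F^{\rm st}]$ along directions near the ray. That property I will read off from \eqref{eq:borel_Fstable}: away from the poles, the denominator $(\re^{\beta^{1/2}\zeta}-1)(\re^{-\beta^{-1/2}\zeta}-1)$ is bounded below, so the Borel transform grows at most exponentially.

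The ray $\theta_+=\arg \ri\beta^{1/2}$ carries $\zeta^+_k$ for $k\ge1$. I also need to check that it carries no $\zeta^-_k$. This holds because $\beta\in\BH$ forces $\arg\beta^{1/2}\neq\arg\beta^{-1/2}$, and their difference is $\theta_\beta\in(0,\pi)$. Hence
\begin{equation*}
{\rm disc}_{\theta_+}[F^{\rm st}]=\sum_{k\ge1}\frac{\re^{\pi\ri(1-\mu)k}}{k}\frac{q^{(1+\mu)k/2}}{q^k-1}\,\re^{-2\pi\ri\beta^{1/2}k/\hbar}.
\end{equation*}
Since $|q|<1$, I expand $\frac{1}{q^k-1}=-\sum_{m\ge0}q^{mk}$. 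Swapping the two sums, writing $x=\re^{\pi\ri(1-\mu)-2\pi\ri\beta^{1/2}/\hbar}q^{(1+\mu)/2}$, and using $-\sum_k x^k q^{mk}/k=\log(1-xq^m)$ gives
\begin{equation*}
{\rm disc}_{\theta_+}[F^{\rm st}]=\sum_{m\ge0}\log(1-xq^m)=\log(x;q)_\infty .
\end{equation*}
Both the interchange and the logarithm expansion require $|x|<1$. Since $|q^{(1+\mu)/2}|=\re^{-\pi(1+\mu)\Im\beta}$, this condition is exactly the stated inequality on $\Im(-\mu-2\beta^{1/2}\hbar^{-1})$ combined with the $\Im\beta$ factor. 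I will need to verify that bookkeeping carefully, rewriting $|x|$ as $\exp(-\pi\,\Im(\ldots))$. The imaginary-part condition in the statement should be read as the convergence region in which the series and the product agree.

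The other three rays are handled the same way. For $\theta_-$ I take $\zeta^-_k$ with $k\ge1$ and the constant $S_{\zeta^-_k}$. It carries an extra minus sign and has $\tilde q$ in place of $q$, with $|\tilde q|<1$ because $-1/\beta\in\BH$. This produces $-\log(\cdot;\tilde q)_\infty$. For $\theta_\pm+\pi$ I take $k\le -1$ and substitute $k\to-k$. Then $\frac{q^{-(1+\mu)k/2}}{q^{-k}-1}=-\frac{q^{(1-\mu)k/2}}{q^k-1}$ and $\re^{-\pi\ri(1-\mu)k}=\re^{\pi\ri(1+\mu)k}\re^{-2\pi\ri k}$, where the last factor is $1$. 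These account for the flipped sign, the exchange $\mu\to-\mu$ inside the Pochhammer symbol, and the sign change in the exponent $2\pi\ri\beta^{\pm1/2}/\hbar$.

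I expect the main obstacle to be the sign convention of ${\rm disc}_\theta$, that is, $s^{\theta+}-s^{\theta-}$ with the orientation of the residue contour. Relative to \eqref{eq: Stokes1-poles} this determines whether each ray contributes $+\sum S\,\re^{-\zeta/\hbar}$ or its negative. I will fix it by checking a single pole against the normalisation \eqref{eq: Stokes0}, and then apply the same orientation consistently on all four rays.
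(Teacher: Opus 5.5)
For $\theta_+$ and $\theta_-$ your argument is the paper's: insert the Stokes constants of Proposition~\ref{prop:resurgence} into \eqref{eq: Stokes1-poles}, expand $1/(q^k-1)=-\sum_{m\ge0}q^{mk}$, swap the sums and resum into $\log(\,\cdot\,;q)_\infty$. The paper writes out only $\theta_+$ and calls the other rays ``similar''. The gap is in your treatment of $\theta_\pm+\pi$. Under $k\to-k$ the prefactor $1/k$ in $S_{\zeta^\pm_k}$ also becomes $-1/k$. This cancels the minus sign from $\frac{q^{-(1+\mu)k/2}}{q^{-k}-1}=-\frac{q^{(1-\mu)k/2}}{q^k-1}$, so there is no net ``flipped sign''. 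Done correctly,
\begin{equation*}
{\rm disc}_{\theta_++\pi}[F^{\rm st}]=\sum_{k\ge1}\frac{\re^{\pi\ri(1+\mu)k}}{k}\,\frac{q^{(1-\mu)k/2}}{q^k-1}\,\re^{2\pi\ri\beta^{1/2}k/\hbar}=+\log\bigl(\re^{\pi\ri(1+\mu)+2\pi\ri\beta^{1/2}/\hbar}\,q^{\frac{1-\mu}{2}};q\bigr)_\infty ,
\end{equation*}
and likewise ${\rm disc}_{\theta_-+\pi}[F^{\rm st}]=-\log(\re^{\pi\ri(1-\mu)+2\pi\ri\beta^{-1/2}/\hbar}\,\tilde q^{\frac{1-\mu}{2}};\tilde q)_\infty$. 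Both are the opposite of the printed signs.

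Two independent checks confirm this. Write $F^{\rm st}_\mu$ to record the $\mu$-dependence. First, \eqref{Bsign} gives $F^{\rm st}_\mu(-\hbar)=F^{\rm st}_{-\mu}(\hbar)$, so $s^{\theta}[F^{\rm st}_{-\mu}](\hbar)=s^{\theta+\pi}[F^{\rm st}_{\mu}](-\hbar)$. Hence ${\rm disc}_{\theta_++\pi}[F^{\rm st}_\mu](\hbar)={\rm disc}_{\theta_+}[F^{\rm st}_{-\mu}](-\hbar)$, with no minus sign. Second, at $\mu=-1$ the $k=-1$ term is $S_{\zeta^+_{-1}}\re^{-\zeta^+_{-1}/\hbar}=\frac{q}{q-1}\re^{2\pi\ri\beta^{1/2}/\hbar}$. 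To leading order in $q$ this is $-q\,\re^{2\pi\ri\beta^{1/2}/\hbar}$, which matches $+\log(q\,\re^{2\pi\ri\beta^{1/2}/\hbar};q)_\infty$ and not its negative.

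The orientation check you plan cannot rescue the printed signs. With \eqref{def: disc} and \eqref{eq: Stokes0}, the contour $s^{\theta+}-s^{\theta-}$ is a clockwise loop around the ray for every $\theta$. It therefore yields $+\sum S_\omega\re^{-\zeta_\omega/\hbar}$ uniformly on all four rays; the paper's conventions are consistent and fix the sign.

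So your proposal does establish the first two formulas. It reaches the last two only through the dropped factor $-1$. Carried out honestly, the same method proves them with the opposite overall sign, which shows they are inconsistent with the paper's own definitions. The sign is harmless for the later quantum-modularity statements, but you should state and prove the corrected formulas.

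A smaller point concerns the convergence condition. The printed inequality says exactly $|\re^{\pi\ri(1-\mu)-2\pi\ri\beta^{1/2}/\hbar}|<1$, while absolute convergence needs $|x|<1$. The two differ by the factor $|q^{(1+\mu)/2}|=\re^{-\pi\Im((1+\mu)\beta)}$. The printed condition therefore suffices only when $\Im((1+\mu)\beta)\ge0$, for example real $\mu\ge-1$. Otherwise you should impose $|x|<1$ directly or appeal to analytic continuation.
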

\begin{proof}Since arguments are similar to all four rays, we only focus on the discontinuity along the direction $\theta_+$. 

Recall we have already computed the corresponding Stokes constants $S_k^+$ in Proposition \ref{prop:resurgence}, and thus, it follows from \eqref{eq: Stokes1-poles} that
\be
\begin{aligned}
{\rm disc}_{\theta_+}[F^{\rm st}](\hbar;\beta)
&=\sum_{k\geq 1}\frac{\re^{\pi\ri(1-\mu)k}}{k}\frac{q^{\frac{(1+\mu)k}{2}}}{q^k-1}\re^{-2\pi\ri k\beta^{1/2}/\hbar}\\
&=-\sum_{k\geq 1}\frac{(\re^{\pi\ri(1-\mu)-2\pi\ri\beta^{1/2}/\hbar}\,q^{\frac{1+\mu}{2}})^k}{k}\sum_{\ell=0}^\infty q^{k\ell}\,,\\
&=\sum_{\ell=0}^\infty\log(1-(\re^{\pi\ri(1-\mu)-2\pi\ri\beta^{1/2}/\hbar}\,q^{\frac{1+\mu}{2}})q^\ell),\\
\end{aligned}
\ee 
where the summations in $k$ and $\ell$ commute because the series is absolutely convergent. Note that the absolutely convergence holds under the condition of $\Im\left(-\mu- 2\beta^{1/2}\hbar^{-1}\right)>0$. Finally, permuting the logarithm with the sum leads the desired result.  
\end{proof}
Notice that when $\mu\in\BZ$, the discontinuities at $\theta_\pm+\pi$ can be expressed in terms of the ones at $\theta_\pm$ as follows:
\begin{equation}
{\rm disc}_{\theta_\pm+\pi}[F^{\rm st}](\hbar;\beta)=-{\rm disc}_{\theta_\pm}[F^{\rm st}](-\hbar;\beta)-\text{sgn}(\mu)\log((-1)^{1+\mu} \re^{2\pi\ri\beta^{1/2}/\hbar}\,q^{\frac{1-|\mu|}{2}};q)_{|\mu|}\,.
\end{equation}
This can be derive by utilising the identities 
\begin{subequations}
\begin{align}
(t;q^{-1})_\infty &=(qt;q)_\infty^{-1}\,,\\
(wq^n;q)_\infty &=\frac{(w;q)_\infty}{(w;q)_n}\,,\quad n\in\BZ\,. \label{eq:pochh-id}
\end{align}
\end{subequations}

\subsection{Borel-summability}

We turn to the Borel-summability of $F^{\rm st}(\hbar;\beta)$ for fixed $\beta$. Recall that, for $x,a_1,a_2$ with ${\rm Re}(x),{\rm Re}(a_1),{\rm Re}(a_2)>0$, the \emph{Barnes double zeta function} may be defined as the following series
\begin{align}
  \zeta_2(s,x|\underline{a})\coloneqq\sum_{n_1,n_2\in\BZ_{\geq0}}\frac{1}{(x+n_1a_1+n_2a_2)^s},
\end{align}
which is absolutely convergent when ${\rm Re}(s)>2$. It can be analytically continued to a meromorphic function of $s\in\BC$ with simple poles at $s\in\{1,2\}$. For fixed $s\in\BC\setminus\BZ_{>0}$, it can also be analytically continued to a meromorphic function of $x\in\BC$. Then, the \emph{Barnes double gamma function} is defined by
\begin{equation}
  \Gamma_2(x|\underline{a})\coloneqq\exp\partial_s\zeta_2(s,x|\underline{a})|_{s=0},
\end{equation}
that is a meromorphic function of $x\in\BC$ whose poles are all simple located at the lattice $\Lambda=\{-(m_1a_1+m_2a_2)|m_1,m_2\in\BZ_{\geq0}\}$. We refer the readers to \cite{multiple_gamma,BBS19} for more details\footnote{The Barnes multiple zeta and gamma function can be defined for $\underline{a}=(a_1,a_2,...,a_N)$ in general.}, and in particular, one can extend the results of \cite{multiple_gamma} as below:

\begin{lemma}\label{lem:Borel_sum_BBS}
Let ${\Psi}(x,\delta;a_1,a_2)$ be the asymptotic expansion of $\log\Gamma_2(x+\delta\vert a_1, a_2)$ as $x\to\infty$, with $\Re(a_1),\Re(a_2)>0$ and $\delta\in\BC$. Then, for ${\rm Re}(x)>0$,
\be
\log\Gamma_2(x+\delta\vert a_1, a_2)=s^0[{\Psi}](x,\delta;a_1,a_2)\,.
\ee
\end{lemma}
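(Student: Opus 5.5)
The plan is to start from an integral representation of $\log\Gamma_2$ and identify it directly with a Laplace transform of the Borel transform of its asymptotic expansion. For $\Re(a_1),\Re(a_2)>0$ and $\Re(x+\delta)>0$, I would use the Hermite/Binet-type formula obtained from the Mellin representation
\[
\zeta_2(s,y|\underline a)=\frac{1}{\Gamma(s)}\int_0^\infty \zeta^{s-1}\frac{\re^{-y\zeta}}{(1-\re^{-a_1\zeta})(1-\re^{-a_2\zeta})}\,d\zeta,\qquad y=x+\delta ,
\]
which is valid for $\Re(s)>2$. In the integrand I split off the first three Taylor terms of $\zeta^2\re^{-\delta\zeta}/((1-\re^{-a_1\zeta})(1-\re^{-a_2\zeta}))$ at $\zeta=0$. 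These are the double Bernoulli numbers/polynomials $B_{2,k}(\delta|a_1,a_2)$ with $k=0,1,2$, up to the sign convention $\zeta\mapsto-\zeta$. The subtracted terms give elementary Gamma-function contributions, which are analytic in $s$. Differentiating at $s=0$ then yields
\[
\log\Gamma_2(x+\delta|\underline a)=P(x,\delta)+\int_0^\infty \re^{-x\zeta}\,\zeta^{-3}\Big[\frac{\zeta^2\re^{-\delta\zeta}}{(1-\re^{-a_1\zeta})(1-\re^{-a_2\zeta})}-\sum_{k=0}^{2}\frac{\tilde B_k\zeta^k}{k!}\Big]d\zeta ,
\]
where $P$ collects the ``unstable'' terms: a quadratic polynomial in $x$ times $\log x$, plus a polynomial. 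This is the standard derivation in \cite{multiple_gamma}. The only extension needed is the shift by $\delta\in\BC$, which enters only through the numerator $\re^{-\delta\zeta}$.

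Next I identify the bracketed function as the Borel transform. Its Taylor series at $0$ is $\sum_{n\ge1}\tilde B_{n+2}\zeta^{n-1}/(n+2)!$, which is exactly $\calB$ applied to the stable part of $\Psi$. Termwise Laplace integration reproduces the asymptotic series $\sum_n \tilde B_{n+2}(n-1)!/(n+2)!\,x^{-n}$, and this matches the large-$x$ expansion of $\log\Gamma_2$ by Watson's lemma. The unstable terms $P$ are kept as they are, exactly as $F^{\rm unst}$ is separated in the paper, and $s^0$ acts trivially on them. So the whole claim reduces to checking that the Borel--Laplace sum in direction $0$ is precisely this integral.

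For summability in direction $0$, I need three facts:
\begin{itemize}
\item The Borel function is analytic on a neighbourhood of $[0,\infty)$. Its poles lie at $2\pi\ri k/a_j$, and since $\Re(a_j)>0$ these have nonzero argument for $k\ne0$, so none sits on the positive real ray. A small sector around it is also pole-free.
\item The function is of exponential type on that sector. Writing $\zeta=r\re^{\ri\phi}$ with $|\phi|$ small, $|1-\re^{-a_j\zeta}|$ stays bounded below, so the function is bounded by $C\re^{|\delta||\zeta|}$.
\item The Laplace transform therefore converges for $\Re(x)>|\delta|$ at first. It then extends analytically to $\Re(x)>0$: for $\Re(x+\delta)>0$ the integrand actually decays like $\re^{-\Re(x+\delta)\zeta}$ along the real ray. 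Both sides are analytic and agree on an open set, so they agree on $\Re(x)>0$.
\end{itemize}

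The main obstacle is this last point, together with making the Mellin/derivative-at-$s=0$ step rigorous when $\delta$ is complex. The $x$-region of convergence of the original representation requires $\Re(x+\delta)>0$, not $\Re(x)>0$. I would handle the part with $\Re(x+\delta)\le0<\Re(x)$ by analytic continuation in $x$, using that $\log\Gamma_2$ is analytic away from the lattice $\Lambda$ shifted by $\delta$. I expect some care about branches of $\log$ to be needed there.
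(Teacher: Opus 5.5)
Your route is the paper's argument run in the opposite order. The paper writes $s^0[\Psi]$ as an explicit Laplace integral and then identifies it with $\log\Gamma_2(x+\delta|\underline a)$ by redoing \cite{multiple_gamma} with $f(t)$ replaced by $f_\delta(t)=t^N\re^{-\delta t}\prod_j(1-\re^{-a_jt})^{-1}$. That is exactly your Mellin representation with the first three Taylor terms subtracted and the derivative taken at $s=0$. Your identification of the bracket with $\calB$ of the stable part, and your checks on the poles $2\pi\ri k/a_j$ and on the growth, fill in details the paper leaves implicit. For $\Re\delta\ge 0$ your argument is complete.

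The step that cannot be completed is the last one, and care with branches will not rescue it. On $\BR_{>0}$ the Borel function behaves like $\zeta^{-1}\re^{-\delta\zeta}$ as $\zeta\to+\infty$. So the Laplace integral in direction $0$ converges on $\{\Re x>0\}\cap\{\Re(x+\delta)>0\}$ and diverges where $\Re(x+\delta)<0$. Moreover, $\Gamma_2(y|\underline a)$ has a pole at $y=0\in\Lambda$, so $\log\Gamma_2(x+\delta|\underline a)$ has a logarithmic singularity at $x=-\delta$. That point lies in $\Re x>0$ as soon as $\Re\delta<0$, as do some other points of $\Lambda-\delta$. Hence neither side is defined and analytic on all of $\Re x>0$. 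Your claim that both sides are analytic on $\Re x>0$ and agree on an open set is false there. The identity can only be asserted on $\{\Re x>0,\ \Re(x+\delta)>0\}$; beyond it, it holds only as an equality of analytic continuations along paths avoiding $\Lambda-\delta$.

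So drop the continuation step and state the lemma on that intersection. There your derivation gives the identity directly: the Mellin formula needs $\Re(x+\delta)>0$ and the subtracted terms need $\Re x>0$. The paper's proof carries the same tacit restriction, and it is not vacuous in Theorem~\ref{thm:resumable}. There $\delta=\tfrac{1-\mu}{2}\beta^{1/2}+\tfrac{1+\mu}{2}\beta^{-1/2}$, whose real part can be negative when $|\mu|>1$.
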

\begin{proof}
It is well-known (c.f.~\cite[Cor. A.1]{BBS19}) that the explicit expression of ${\Psi}$ is
\be\label{eq:tilde-Phi}
{\Psi}(x,\delta;\underline{a})=-\frac{1}{2} B_{2,2}(x+\delta\vert \underline{a})\log(x)+\frac{3x^2}{4}B_{2,0}(\delta|\underline{a})+x B_{2,1}(\delta\vert \underline{a})+\sum_{k>0}(-1)^k\frac{ B_{2,k+2}(\delta\vert \underline{a})}{k(k+1)(k+2)}x^{-k}\,.
\ee
Hence, directly computing the Borel sum of ${\Psi}$ leads to (c.f. \eqref{eq:borel_Fstable})
\begin{multline}
s^0[{\Psi}](x,\delta;\underline{a})=-\frac{1}{2} B_{2,2}(x+\delta\vert \underline{a})\log(x)+\frac{3x^2}{4}B_{2,0}(\delta|\underline{a})+x B_{2,1}(\delta\vert \underline{a})\\
+\int_0^\infty \frac{d\zeta}{\zeta} \re^{-\zeta x}\bigg[\frac{\re^{-\delta\zeta}}{(\re^{-a_1\zeta}-1)(\re^{-a_2\zeta}-1)}-\frac{ B_{2,2}(\delta\vert\underline{a})}{2}+\frac{ B_{2,1}(\delta\vert\underline{a})}{\zeta}-\frac{ B_{2,0}(\delta\vert\underline{a})}{\zeta^2}\bigg]\,,
\end{multline}
which for $\delta=0$ agrees with \cite[(3.13-14)]{multiple_gamma} that give an integral representation of the double gamma function $\Gamma_2(0\vert\underline{a})$. Then, generalizing the proof of \cite[(3.13-14)]{multiple_gamma}\footnote{It is enough to replace the function $f(t)$ chosen in \cite[(3.1)]{multiple_gamma} with $f_\delta(t)=t^N \re^{-\delta t}\prod_{j=1}^N(1- \re^{-a_j t})^{-1}$. See also the discussion after the proof of \cite[Proposition 2.3]{multiple_gamma}. In addition, the proof extends to the case with $\text{Re}(a_1),\text{Re}(a_2)>0$, though $a_1,a_2$ are assumed to be real positive in \cite{multiple_gamma}.} one derive the expected result.
\end{proof} 

We can now deduce the summability of $F^{\rm st}(\hbar;\beta)$.

\begin{theorem}\label{thm:resumable}
The formal series $F^{\rm st}(\hbar;\beta)\in\BQ[\mu,\beta^{\pm1/2}][\![\hbar/t]\!]$ in ~\eqref{eq:F_stable} is Borel-summable, and for ${\rm Re}(\hbar/t)>0$, its sum along the real positive direction is given by the following expression:
\begin{equation}\label{eq:borel_sum_Fstable}
s^0[F^{\rm st}](\tfrac{\hbar}{t};\beta)+F^{\text{unst}}(\tfrac{\hbar}{t};\beta)=\log\Gamma_2(\tfrac{t}{\hbar}+\tfrac{\mathfrak{b}}{2}(1-\mu)\vert \beta^{1/2},-\beta^{-1/2})\,.
\end{equation}
\end{theorem}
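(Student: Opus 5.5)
The plan is to reduce everything to Lemma~\ref{lem:Borel_sum_BBS} by matching $F^{\rm st}+F^{\rm unst}$ term by term with the asymptotic expansion $\Psi(x,\delta;a_1,a_2)$ in \eqref{eq:tilde-Phi}. Here we take $x=t/\hbar$, $\delta=\tfrac{\mathfrak b}{2}(1-\mu)$, $a_1=\beta^{1/2}$, $a_2=-\beta^{-1/2}$. By Remark~\ref{rem:t-dependence} I will set $t=1$ throughout and restore $t$ only at the end by the substitution $\hbar\mapsto\hbar/t$.

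\textbf{Step 1: summability.} Proposition~\ref{prop:resurgence} gives the closed form \eqref{eq:borel_Fstable} for $\calB[F^{\rm st}]$. All of its singularities are simple poles on the four rays $\theta_\pm,\theta_\pm+\pi$. With our branch choice $\Re(\beta^{\pm1/2})>0$, none of these rays is the positive real axis. I will then check that \eqref{eq:borel_Fstable} is of exponential type along $\arg\zeta=0$. The denominator $(\re^{\beta^{1/2}\zeta}-1)(\re^{-\beta^{-1/2}\zeta}-1)$ grows or decays exponentially along that ray, so its inverse is bounded by $C\re^{\Lambda|\zeta|}$ away from the poles, and the polynomial correction terms are harmless. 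This yields Borel summability in direction $0$. The same bound holds in every non-Stokes direction.

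\textbf{Step 2: matching coefficients.} For the stable part I compare \eqref{RTRF_g} with the tail $\sum_{k>0}(-1)^k\mathsf B_{2,k+2}(\delta)x^{-k}/(k(k+1)(k+2))$ in \eqref{eq:tilde-Phi}. Writing $k=2g-2$, the coefficient of $\hbar^{2g-2}$ agrees exactly with $F_g$, including the sign $(-1)^{2g-2}$ and the denominator $2g(2g-1)(2g-2)$. For the unstable part, the second line of \eqref{F^unst} is visibly the non-tail part of \eqref{eq:tilde-Phi}. The only place needing care is $\mathsf B_{2,0}$: it is independent of its argument, so $\mathsf B_{2,0}(\tfrac t\hbar+\delta)=\mathsf B_{2,0}(\delta)$. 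Hence $F^{\rm unst}+F^{\rm st}=\Psi(1/\hbar,\delta;a_1,a_2)$ as formal objects.

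\textbf{Step 3: apply the lemma.} With Steps 1 and 2 in hand, Lemma~\ref{lem:Borel_sum_BBS} gives \eqref{eq:borel_sum_Fstable} for $\Re(1/\hbar)>0$, i.e.\ $\Re(\hbar)>0$. The Borel sum of the explicit non-series part is just itself.

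\textbf{Main obstacle.} The hardest step is the applicability of Lemma~\ref{lem:Borel_sum_BBS} with $a_2=-\beta^{-1/2}$, whose real part is negative, whereas the lemma assumes $\Re(a_1),\Re(a_2)>0$. I would handle this first by checking whether the sign conventions (the branch of $\beta^{1/2}$ and the rescaling $\zeta\mapsto-\zeta$ behind \eqref{Bsign}) allow one to rewrite everything with parameters $(\beta^{1/2},\beta^{-1/2})$ and shifted $\delta=\tfrac{\mathfrak b}{2}(1+\mu)$. Concretely, I would use the reflection \eqref{Bsign} together with $B_{2,k}(x|a_1,-a_2)$-type identities. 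The goal is to verify that the integral representation in the proof of the lemma still converges, i.e.\ that $\re^{-\delta\zeta}/((\re^{-a_1\zeta}-1)(\re^{-a_2\zeta}-1))$ has no pole on $(0,\infty)$ and at most exponential growth there. If this direct route fails, I would argue by analytic continuation in $a_2$ from the region $\Re(a_2)>0$. Both sides are analytic in $a_2$ as long as no Stokes ray crosses the positive real axis, so the identity would persist.
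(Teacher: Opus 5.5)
Your Steps 1--3 have the same skeleton as the paper's proof, and Step 1 (exponential type of \eqref{eq:borel_Fstable} along $\arg\zeta=0$) is a valid direct proof of summability. The gap is in the step you flag yourself: neither proposed fix for $a_2=-\beta^{-1/2}$ works. The shift $\delta=\tfrac{\mathfrak b}{2}(1+\mu)$ comes from \eqref{Bsign}. That reflection keeps the parameters $(\beta^{1/2},-\beta^{-1/2})$ and only trades $\hbar$ for $-\hbar$, so it cannot remove the negative parameter.

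The identity that does remove it is \eqref{id:bernoulli}. Since $(\re^{-\beta^{-1/2}z}-1)^{-1}=-\re^{\beta^{-1/2}z}(\re^{\beta^{-1/2}z}-1)^{-1}$, one gets $\mathsf B_{2,k}(\delta)=-B_{2,k}(\delta+\beta^{-1/2}\vert\beta^{1/2},\beta^{-1/2})$ for all $k$. Every term of $\Psi$ (including the $\log$ and $B_{2,0}$ terms) is linear in the $B_{2,k}$, so with $\delta=\tfrac{\mathfrak b}{2}(1-\mu)$ you get $F^{\rm unst}+F^{\rm st}=-\Psi(1/\hbar,\delta+\beta^{-1/2};\beta^{1/2},\beta^{-1/2})$. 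Lemma~\ref{lem:Borel_sum_BBS} now applies with both parameters in the right half-plane. Then \eqref{id:Gamma} converts $-\log\Gamma_2(1/\hbar+\delta+\beta^{-1/2}\vert\beta^{1/2},\beta^{-1/2})$ into the right-hand side of \eqref{eq:borel_sum_Fstable}.

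Convergence of the integral representation for $\Re(a_2)<0$ is not the real issue; it does converge. The issue is that $\Gamma_2(\cdot\vert\beta^{1/2},-\beta^{-1/2})$ is not defined by the double zeta series at all. It is \eqref{id:Gamma} that pins down which function the theorem refers to.

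Your fallback, analytic continuation in $a_2$ from $\Re(a_2)>0$, would fail. Any path from the right half-plane to $-\beta^{-1/2}$ crosses $\ri\BR$. For $a_2\in\ri\BR$ the poles $2\pi\ri k/a_2$ of the Borel integrand lie on $(0,\infty)$, so your proviso that no Stokes ray crosses the positive real axis is necessarily violated. Across that crossing $s^0[\Psi]$ jumps by a Stokes discontinuity of the kind computed in Lemma~\ref{lem:q-pochh}, namely the logarithm of a $q$-Pochhammer symbol. Hence the analytic continuation of $\log\Gamma_2(x+\delta\vert a_1,a_2)$ in $a_2$ differs from $s^0[\Psi]$ at $a_2=-\beta^{-1/2}$ by a nonzero term, and the identity does not persist.

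You can see this already in rank one. Continuing $\Gamma_1(x\vert a)=\Gamma(x/a)\,a^{x/a-1/2}/\sqrt{2\pi}$ from $a$ to $\re^{\pm\pi\ri}a$ gives $\Gamma_1(x\vert -a)\,\Gamma_1(x+a\vert a)=(1-\re^{\pm2\pi\ri x/a})^{-1}$ instead of the value $1$ that the analogue of \eqref{id:Gamma} requires.
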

\begin{proof}
Recall (c.f. \cite{Nar04}) that double Bernoulli polynomials obey
\begin{equation}
  B_{2,n}(x\vert a_1,-a_2)=- B_{2,n}(x+a_2\vert a_1,a_2) \label{id:bernoulli} \,,
\end{equation}
or equivalently for the double Gamma function 
\begin{equation}
  \Gamma_2(x|a_1,-a_2)\Gamma_2(x+a_2|a_1,a_2)=1.\label{id:Gamma}
\end{equation}
Then, applying the identity~\eqref{id:bernoulli} to ~\eqref{eq:tilde-Phi} and recalling the definitions of $F^{\rm unst}$ \eqref{F^unst} and $F^{\rm st}$ \eqref{eq:F_stable}, we find that 
\be
{\Psi}(\tfrac{1}{\hbar},\tfrac{\mathfrak b}{2}(1-\mu)+\beta^{-1/2}\vert \beta^{1/2} ,\beta^{-1/2})=-F^{\text{unst}}(\hbar,\beta)-F^{\rm st}(\hbar;\beta)\,,
\ee
This proves the Borel--Laplace summability. 
The final expression ~\eqref{eq:borel_sum_Fstable} follows from \eqref{id:Gamma} and Lemma~\ref{lem:Borel_sum_BBS}. Finally, the $t$-dependence is recovered by rescaling.
\end{proof}

\subsubsection{Borel-summability in $\beta$}\label{sec:borel_summability_in_beta_}

Since the Borel sum $s^\theta[F^{\text{st}}]$ depends on both $\hbar$ and $\beta$, one may wonder the summability in terms of $\beta$ for fixed $\hbar$. In order not to digress from our main focus, however, we present a simple observation about the $\beta$-asymptotics of $\text{disc}_{\theta_\pm}[F^{\text{st}}]$ instead, and leave a thorough analysis to future work. 

Let $\psi\in\BC[\![\beta]\!]$ be a resurgent series and choose $\theta=\arg\eta_\omega$, where $\eta_\omega$ is a singular point of $\calB[\psi](\eta)$.\footnote{We use the variable $\eta$ as conjugate to the variable $\beta$ under the Borel/Laplace transforms.} Then, if $\psi$ is Borel--Laplce summable, the median resummation is defined by
\be
\calS^{\rm med}_{\theta}[\psi](\beta)\coloneqq\frac{s^{\theta+\varepsilon}[\psi](\beta)+s^{\theta-\varepsilon}[\psi](\beta)}{2}\,,
\ee
for a sufficiently small positive $\varepsilon$. Now, recall that $\epsilon_1=\beta^{1/2}\hbar$ and $\epsilon_2=-\beta^{-1/2}\hbar$, as in~\eqref{eq:epsilon}. It turns out that if we regard $(\epsilon_2,\beta)$ as independent variables, instead of $(\hbar,\beta)$, the $\beta$-asymptotics of ${\rm disc}_{\theta_\pm}[F^{\rm st}]$ enjoys an interesting property in terms of the median resummation as below:
\begin{lemma}
Set $\mu=-1$ and assume $|\Re(\epsilon_2)|<1/2$. The discontinuity ${\rm disc}_{\theta_+}[F^{\rm st}](-\beta^{\frac12}\epsilon_2;\beta)$ is reconstructed via the median resummation of its asymptotic expansion $\psi$ as $\beta\to 0$ with $\Im(\beta)>0$. More concretely, 
\begin{equation}\label{eq:s_med_f_pm}
\calS^{\rm med}_{\theta_+}[\psi](\beta)={\rm disc}_{\theta_+}[F^{\rm st}](-\beta^{\frac12}\epsilon_2;\beta)\,,
\end{equation}
where $\beta\in\BH\cap\{\Re(\ri\epsilon_2^{-1}\beta)>0\}$. In particular, 
\be\label{eq:s_0_f_pm}
s^0[\psi](\beta)={\rm disc}_{\theta_+}[F^{\rm st}](-\beta^{\frac12}\epsilon_2;\beta)+{\rm disc}_{\theta_-}[F^{\rm st}](-\beta^{\frac12}\epsilon_2;\beta)\,.
\ee
\end{lemma}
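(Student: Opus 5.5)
The plan is to make everything explicit at $\mu=-1$, where the discontinuities become plain $q$-Pochhammer symbols, and then to carry out the resurgence analysis in $\beta$ by hand.

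\textbf{Step 1: specialise.} Substituting $\hbar=-\beta^{1/2}\epsilon_2$ and $\mu=-1$ into Lemma~\ref{lem:q-pochh}, the prefactors $\re^{\pi\ri(1-\mu)}$ and $q^{(1+\mu)/2}$ both become trivial. Set
\begin{equation*}
x\coloneqq \re^{2\pi\ri/\epsilon_2},\qquad \tilde x\coloneqq \re^{2\pi\ri/(\beta\epsilon_2)} .
\end{equation*}
Then
\begin{equation*}
{\rm disc}_{\theta_+}[F^{\rm st}]=\log(x;q)_\infty,\qquad {\rm disc}_{\theta_-}[F^{\rm st}]=-\log(\tilde x;\tilde q)_\infty .
\end{equation*}
In these variables $x$ does not depend on $\beta$. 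The $\beta\to0$ behaviour therefore comes only from $q=\re^{2\pi\ri\beta}\to1$. The dual piece $-\log(\tilde x;\tilde q)_\infty$ is of the form $\sum_k c_k\,\re^{-2\pi\ri k(1-\epsilon_2^{-1})/\beta}$, so it is exponentially small in $1/\beta$ in the stated half-plane. Hence both sides of \eqref{eq:s_0_f_pm} have the same asymptotic expansion $\psi$.

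\textbf{Step 2: compute $\psi$ and its Borel transform.} Expand $\log(x;q)_\infty=-\sum_{m\ge1}\frac{x^m}{m}\frac{1}{1-q^m}$. Then write $\frac{1}{1-\re^{2\pi\ri m\beta}}$ through the Bernoulli generating function. This gives the standard Gevrey-1 expansion
\begin{equation*}
\psi(\beta)=-\sum_{n\ge0}\frac{B_n\,(2\pi\ri\beta)^{n-1}}{n!}\,{\rm Li}_{2-n}(x).
\end{equation*}
The assumption $|\Re(\epsilon_2)|<1/2$ keeps $x$ away from $1$, so each ${\rm Li}_{2-n}(x)$ is finite.

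Next take the Borel transform of $\psi$ in $\beta$, with $\eta$ conjugate to $\beta$. Resumming the Bernoulli series term by term in $m$ gives
\begin{equation*}
\calB[\psi](\eta)=\sum_m \frac{x^m}{m^2}\,\Big(\text{regularised }\frac{1}{\re^{\eta/m}-1}\Big).
\end{equation*}
After the sum over $m$ is performed, the singularities should be simple poles on a single pair of rays, namely $\eta\in 2\pi\ri\,\BZ_{\neq0}$ suitably shifted by $2\pi\ri\epsilon_2^{-1}$. The condition $\Re(\ri\epsilon_2^{-1}\beta)>0$ is meant to select the relevant Stokes ray, with the integration directions on either side of it lying in the region of exponential type.

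\textbf{Step 3: identify the lateral sums.} Lateral Laplace integrals on either side of that ray should differ by the sum over residues. As in the proof of Lemma~\ref{lem:q-pochh}, a geometric-series resummation should assemble these residues into
\begin{equation*}
s^{\theta_++}[\psi]-s^{\theta_+-}[\psi]=\log(\tilde x;\tilde q)_\infty ,
\end{equation*}
up to the sign fixed by orientation.

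For $s^0[\psi]$ itself, I would use the Woronowicz/Faddeev integral representation of the quantum dilogarithm. The combination $\log\frac{(x;q)_\infty}{(\tilde x;\tilde q)_\infty}$ is $\log\Phi_{\beta^{1/2}}$ evaluated at an appropriate argument. That function is analytic near $\beta\to0^+$ and is given by a Laplace-type integral matching $\calL^0[\calB[\psi]]$. This yields \eqref{eq:s_0_f_pm}.

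Finally, the median sum on the Stokes ray differs from $s^0$ by half the Stokes jump. Combining this with the two lateral relations should give \eqref{eq:s_med_f_pm}.

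\textbf{Main obstacle.} I expect Step 3 to be the hard part. The signs and the precise location of the Stokes rays must be matched so that the median sum reproduces exactly ${\rm disc}_{\theta_+}$, with no stray $\pm\frac12\log(\tilde x;\tilde q)_\infty$ left over. One must also justify that the sum over $m$ and the Laplace integral can be interchanged. This uses $|\Re\epsilon_2|<1/2$ to bound ${\rm Li}_{2-n}(x)$ uniformly and to get exponential type along the chosen directions. I would first verify the sign bookkeeping numerically for a sample $\epsilon_2$.
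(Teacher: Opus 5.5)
\textbf{There is a genuine gap: your picture of the Borel plane in Steps~2--3 is wrong.} Your Step~1 specialisation is correct ($x=\re^{2\pi\ri/\epsilon_2}$, $\tilde x=\re^{2\pi\ri/(\beta\epsilon_2)}$). Identifying $s^0[\psi]$ with $\log\Phi_{\beta^{1/2}}$ via the Faddeev integral is also the Garoufalidis--Kashaev input the paper cites.

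Since $\calB[\beta^{n-1}]=\eta^{n-2}/(n-2)!$, the term of fixed $m$ in $\calB[\psi]$ is an entire function of $\eta$, not $\frac{1}{\re^{\eta/m}-1}$. You dropped the $1/(n-2)!$. Also $\psi=+\sum_n\frac{B_n(2\pi\ri\beta)^{n-1}}{n!}{\rm Li}_{2-n}(x)$, with leading term ${\rm Li}_2(x)/(2\pi\ri\beta)$. Singularities appear only after summing over $m$. Use ${\rm Li}_{2-n}(\re^{w})=(n-2)!\sum_{k\in\BZ}(2\pi\ri k-w)^{1-n}$ with $w=2\pi\ri/\epsilon_2$. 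Up to an additive constant, and with symmetric summation in $k$, this gives
\[
\calB[\psi](\eta)=\frac{1}{\eta}\sum_{k\in\BZ}\Big(\frac{1}{\re^{v_k}-1}-\frac{1}{v_k}+\frac12\Big),\qquad v_k=\frac{\eta}{k-\epsilon_2^{-1}} .
\]
So the poles sit at $\eta=2\pi\ri\ell(k-\epsilon_2^{-1})$ for $k\in\BZ$, $\ell\in\BZ_{\neq0}$, with Stokes constant $-1/\ell$.

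These poles lie on infinitely many rays, one for each $k$, accumulating at $\pm\ri\BR_{>0}$. This is already visible in Step~1: the exponents in $\log(\tilde x;\tilde q)_\infty$ are $-2\pi\ri\ell(k-\epsilon_2^{-1})/\beta$ for all $k\ge0$, not only $k=1$. The jump across the ray labelled $k$ is just $\log(1-\tilde x\tilde q^{k})$. No single ray carries $\log(\tilde x;\tilde q)_\infty$, so the one-ray geometric resummation of Step~3 does not exist. Relatedly, $|\Re\epsilon_2|<\frac12$ does not keep $x\neq1$ (take $\epsilon_2=\frac13$). What governs which rays lie between $0$ and $\ri\BR_{>0}$ is $\Im(\epsilon_2^{-1})>0$ together with the position of $\Re(\epsilon_2^{-1})$ relative to $\BZ$.

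\textbf{The half-jump problem you flag is therefore structural, not sign bookkeeping.} Suppose $s^0[\psi]=\log(x;q)_\infty-\log(\tilde x;\tilde q)_\infty$ and the median differs from $s^0[\psi]$ by half of a jump $\pm\log(\tilde x;\tilde q)_\infty$. Then you can only obtain $\log(x;q)_\infty-\frac12\log(\tilde x;\tilde q)_\infty$ or $\log(x;q)_\infty-\frac32\log(\tilde x;\tilde q)_\infty$, never the target $\log(x;q)_\infty$.

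The missing idea is that the positive real axis and the direction of the median are separated by infinitely many Stokes rays. For instance, take $\Im(\epsilon_2^{-1})>0$ and $-1<\Re(\epsilon_2^{-1})<0$. Rotating the Laplace direction from $0$ towards the accumulation direction $\ri\BR_{>0}$ then crosses exactly the rays $k=0,1,2,\dots$ (with $\ell>0$), so
\[
\lim_{\theta\to\frac{\pi}{2}^-}s^{\theta}[\psi]=s^0[\psi]+\sum_{k\ge0}\log(1-\tilde x\tilde q^{k})=\log(x;q)_\infty .
\]
The analogous computation from the other side gives the same limit. There one uses the $\ell<0$ rays in the second quadrant, whose jumps assemble into $\log(\tilde q\tilde x^{-1};\tilde q)_\infty$, starting from $s^{\pi}[\psi]$. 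Hence the median at $\ri\BR_{>0}$ equals $\log(x;q)_\infty$, and no half-jump appears.

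Making this rigorous requires exponential-type bounds on $\calB[\psi]$ that are uniform on sectors approaching the accumulation direction. These are needed to sum infinitely many residue contributions and to exchange the limit with the Laplace integral. That uniform control is exactly what the paper imports by citing \cite[Theorem~1.3]{GK} and \cite[Theorem~A-29]{wheeler-thesis}, and your plan contains no substitute for it.
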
 
\begin{proof}
The results follows from the summability of the Faddeev's quantum dilogarithm~\cite[Theorem~1.3]{GK} and~\cite[Theorem~A-29]{wheeler-thesis}. See also~\cite[Section~3.1.1]{FR-pmn} for similar computations.
\end{proof}

\begin{remark}
As \eqref{eq:s_med_f_pm} and \eqref{eq:s_0_f_pm} imply, ${\rm disc}_{\theta_-}[F^{\rm st}](-\beta^{\frac12}\epsilon_2;\beta)$ is the non-perturbative correction to the asymptotic expansion $\psi$ of ${\rm disc}_{\theta_+}[F^{\rm st}](-\beta^{\frac12}\epsilon_2;\beta)$ as $\beta\to0$. It is natural to ask whether the non-perturbative correction to the asymptotic expansion of ${\rm disc}_{\theta_-}[F^{\rm st}](-\beta^{\frac12}\epsilon_2;\beta)$ as $\beta\to\ri\infty$ is given by ${\rm disc}_{\theta_+}[F^{\rm st}](-\beta^{\frac12}\epsilon_2;\beta)$. This is within the spirit of the  strong-weak symmetry discussed in~\cite{FR1phys}, indeed for $\mu=-1$ the proof follows by simply applying the $S$ transformation to~\eqref{eq:s_0_f_pm}, however a general proof for all $\mu$ will require a separate analysis that is beyond the scope of this project.
\end{remark}

\section{Quantum Modularity}\label{sec:modularity}

In this section, we first review basic notions of quantum modular forms --- see \cite{Wheeler-3manifold} and references therein for more detailed discussions. Then, we prove explicit instances of quantum modularity in refined topological recursion. Our observation suggests that some of the properties may hold in a more general setting.

\subsection{Quantum modular forms}
Recall that for $k\in\frac{1}{2}\BZ$, a weight $k$ holomorphic modular form for a group $\Gamma\subseteq{\rm SL}_2(\BZ)$ is a map $f\colon\BH\to\BC$ such that 
\be
\quad \forall \gamma=\begin{pmatrix} a& b \\ c & d\end{pmatrix}\in\Gamma,\quad f\vert_{k,\gamma}(\tau)-f(\tau)=0\,,
\ee
where $\tau\in\BH$ and 
\begin{equation}
  f\vert_{k,\gamma}(\tau):=(c\tau+d)^{-k} f\left(\frac{a\tau+b}{c\tau+d}\right)\,,
\end{equation}
and with bounded growth at infinity.

Following the original proposal by Zagier~\cite{zagier_modular}, a map $f\colon\BQ\to\BC$ is then called a weight $k\in\frac{1}{2}\BZ$ quantum modular form with respect to a subgroup $\Gamma\subseteq\mathrm{SL}_2(\BZ)$ if the \emph{cocycle} $h\colon\Gamma\times\BQ\to\BC$ 
\begin{equation}\label{cocycle}
     h_\gamma[f](\tau):=f\vert_{k,\gamma}(\tau) - f(\tau) 
\end{equation} 
has ``better analyticity properties'' than the function $f$ itself for every $\gamma=\left( \begin{smallmatrix}
    a & b\\
    c & d
\end{smallmatrix} \right)\in\Gamma$---\emph{e.g.}, it is real analytic over $\BR\setminus\{\gamma^{-1}(\infty)\}$. Our focus is on so-called holomorphic quantum modular forms whose cocycle has a specific type of analytic properties: 
\begin{definition} \label{def: holoQM}
    A holomorphic function $f\colon\BH\to\BC$ is a weight $k\in\frac{1}{2}\BZ$ \emph{holomorphic quantum modular form} for a subgroup $\Gamma\subseteq\mathrm{SL}_2(\BZ)$ if its cocycle $h_\gamma[f]\colon\BH\to\BC$ extends analytically to
    \be\label{eq:cocycle-gen}
    \BC_\gamma:=\{\tau\in\BC| (c\tau+d)\in\BC\setminus\BR_{\leq 0}\}
    \ee
    for every $\gamma=\left( \begin{smallmatrix}
    a & b\\
    c & d
\end{smallmatrix} \right) \in\Gamma$.\footnote{Note that $\BC_\gamma = \BC \setminus \left(-\infty; \, -d/c\right]$ when $c>0$ and $\BC_\gamma = \BC \setminus \left[-d/c; \, +\infty \right)$ when $c<0$.}  (Holomorphic) quantum modular forms of weight zero are also known as \emph{(holomorphic) quantum modular functions}. 
\end{definition}

In a nutshell, the cocycle of a quantum modular form $f$ measures the failure of invariance under the action of the modular group $\Gamma$; clearly, the cocycle vanishes for every $\gamma\in\Gamma$ if and only if $f$ is a modular form. Hence, (holomorphic) quantum modular forms are characterized---not uniquely\footnote{Two quantum modular forms differing by a modular form of the same weight will have the same cocycle.}---by their cocycle.

In addition, as in the classical case, there is a notion of quantum Jacobi modular forms that was first introduced in~\cite{Folsom-Brig}. 
\begin{definition} \label{def: Jacobi_holoQM}
    A map $f\colon\BC\times\BH\to\BC$ is called a weight $k\in\frac{1}{2}\BZ$ and index $m\in\frac{1}{2}\BZ$ \emph{holomorphic Jacobi quantum modular form} for a subgroup $\Gamma\subseteq\mathrm{SL}_2(\BZ)$ if the following functions 
    \begin{equation}
    \begin{aligned}
        h_\gamma[f](z;\tau)&:=(c\tau+d)^{-k}\, \re^{-\frac{2\pi\ri m cz^2}{c\tau+d}} f\left(\frac{z}{c\tau+d};\frac{a\tau+b}{c\tau+d}\right) - f(z;\tau) \\
        g_{(\lambda,\nu)}[f](z;\tau)&:= \re^{2\pi\ri m (\lambda^2 \tau+2\lambda z)} f(z+\lambda \tau+\nu;\tau) - f(z;\tau)
        \end{aligned}
    \end{equation}
    extend analytically to $\BC_\gamma$
    for every $\gamma=\left( \begin{smallmatrix}
    a & b\\
    c & d
\end{smallmatrix} \right) \in\Gamma$ and every $\lambda,\nu\in\BZ$. 
The function $h:\Gamma\times\BC\times\BH\to\BC$ is still called the cocycle. In addition, the variables $\tau$ and $z$ are usually referred to as the modular and elliptic parameter, respectively.  
\end{definition}

As mentioned above, modular forms are all quantum modular because their cocycles vanish. One can  easily see that quasi-modular forms are also quantum modular. Since the set of all quasi-modular forms is generated by Eisenstein series $G_k$ of weight $k=2,4,6$ and $G_4,G_6$ are modular, it is sufficient to verify that $G_2$ is quantum modular. Recall that the Eisenstein series of weight $k$ as a $q$-series is given by
\be\label{eq:Eisenstein}
G_{k}(\tau)=-\frac{B_k}{2k}+\sum_{n=1}^\infty n^{k-1}\frac{q^{n}}{1-q^n}\,, \quad q=\re^{2\pi i\tau}\,.
\ee
The cocycle of of $G_2$ under $S=\left( \begin{smallmatrix}
    0 & -1\\
    1 & 0
\end{smallmatrix} \right) :\tau\mapsto-\tau^{-1}$ is expressed by the following equation
\begin{equation}
\tau^{-2}G_2\Big(-1/\tau\Big)-G_2(\tau)=\frac{i}{4\pi \tau}\,,
\end{equation}  
that is a meromorphic function with a simple pole at $\tau=0$, well-defined in a domain even larger than the cut plane $\BC_S=\BC\setminus\BR_{\leq 0}$, hence, meeting the requirement in Definition ~\ref{def: Jacobi_holoQM}. 

Elementary examples of quantum modular forms that are not modular, quasi-modular, nor Mock modular are Eisenstein series of odd weight \cite{zagier-talk}. Note that $G_k$ defined by \eqref{eq:Eisenstein} remains nonzero when $k$ is odd, unlike the definition as a lattice point sum. The cocycle under the $S$-transformation yields:
\begin{equation}
  h_S[G_{2k-1}](\tau)=-2\sum_{m>0}\frac{1}{m^{2k-1}}-2\sum_{n,m>0}\frac{1}{(m\tau+n)^{2k-1}}\,.
\end{equation}
The right-hand side is holomorphic on $\BC\setminus \BR_{\leq0}$, hence meeting the condition in Definition ~\ref{def: Jacobi_holoQM}. It turns out that Eisenstein series have a somewhat surprising relation to the free energy $F_g$ as shown in Remark~\ref{rmk:large hbar}. 

\subsection{Faddeev's quantum dilogarithm}\label{sub:faddeev_s_quantum_dilogarithm}

Let $\epsilon>0$ be a sufficiently small positive real number. For $x\in\BC$ and $\bb\in\BC\setminus \ri \BR$ satisfying $\text{Im}(2x)<\text{Re}(\bb+\bb^{-1})$, \emph{Faddeev's quantum dilogarithm} \cite{Faddeev1} is defined by
\be \label{eq: seriesPhib}
\Phi_{\bb}(x) = \exp \left( \int_{\BR + \ri \epsilon} \frac{\re^{-2 \ri x z}}{4 \sinh(z \bb ) \sinh(z \bb^{-1})} \frac{d z}{z} \right).
\ee
Clearly, it satisfies 
\be \label{eq: symmPhib}
\Phi_{\bb}(x) = \Phi_{-\bb}(x) = \Phi_{\bb^{-1}}(x) \, .
\ee 
Let us set $\cb = \ri (\bb + \bb^{-1})/2 \, ,\quad q = \re^{2 \pi \ri \bb^2} \, , \quad \tilde{q} = \re^{- 2 \pi \ri \bb^{-2}} \, $. In the domain $\text{Im}(\bb^2)>0$ (equivalently $\bb$ is inside the first and third quadrant), we have $|q|,|\tilde q|<1$ and thus $\Phi_{\bb}(x)$ can be written in terms of $q$-Pochhammer symbols as
\begin{equation}\label{Faddeev and q-Pochhammer}
  \Phi_{\bb}(x)=\frac{( \re^{2 \pi \bb (x + \cb)}; \, q)_{\infty}}{(  \re^{2 \pi \bb^{-1} (x - \cb)}; \, \tilde{q})_{\infty}}.
\end{equation}
This expression can be extended to the domain $\text{Im}(\bb^2)<0$ (equivalently $\bb$ is inside the second and fourth quadrant) thanks to the identities \eqref{eq: symmPhib}, and in fact, one can also show that it is well-defined in $\bb\in\BC\setminus \ri \BR$ as we initially stated. Then, \eqref{Faddeev and q-Pochhammer} immediately shows that $ \Phi_{\bb}(x)$ has simple zeroes and poles at
\be
x = -c_{\bb} - \ri m \bb - \ri n \bb^{-1} \, ,\quad x = c_{\bb} + \ri m \bb + \ri n \bb^{-1} \, ,
\ee
for $m,n \in \BN$, respectively. 

For fixed $x$, the asymptotic expansion of Faddeev's quantum dilogarithm as $\bb^2\to0$ is
\be \label{eq: logPhib}
\log \Phi_{\bb}\left( \frac{x}{2 \pi \bb} \right)\sim \sum_{k=0}^{\infty} (2 \pi \ri \bb^2)^{2k-1} \frac{B_{2k}(1/2)}{(2k)!} \mathrm{Li}_{2-2k}(-\re^x) \, ,
\ee
where $\mathrm{Li}_n(z)$ is the polylogarithm of order $n$, and $B_n(z)$ is the $n$-th Bernoulli polynomial. It is a divergent series, but Borel--Laplace summable, and it indeed coincides with its Borel--Laplace sum in the positive real direction \cite[Theorem 1.3]{GK}. In other words, $\Phi_\bb(x)$ is {\em Borel regular} as defined in~\cite{borel_reg}. 

\subsubsection{Applications}

Faddeev's quantum dilogarithm $\Phi_\bb(x)$ plays important roles in a variety of subjects in mathematics and physics. Historically, it is known to be one of the building block of quantum invariants of three-manifolds. More recently it also appeared in the study of the so-called fermionic traces~\cite{KM15}. 

In the study of quantum modular forms, a new class of $q$-series, called {\em modular resurgent}, has been recently defined in terms of the resurgent structure of their asymptotic expansion, by the first author and C. Rella~\cite[Definition~3.2]{FR1maths}. These $q$-series have been also conjectured to give rise to quantum modular forms~\cite[Conjecture~4]{FR1maths}. This conjecture is supported by a number of evidence \cite[Theorem~3.10]{FR-pmn}, \cite[Theorems~4.11,~4.12]{FR-pmn} whose proofs utilise features of $\Phi_\bb(x)$, e.g. the Borel--Laplace summability and its relation to $q$-Pochhammer symbols \eqref{Faddeev and q-Pochhammer}. 
\subsection{Relation to refined topological recursion}\label{sec:QM}

We now discuss our main observation, namely, quantum modularity in refined topological recursion. A crucial point is that we should treat $(\epsilon_2,\beta)$ as independent variables, instead of $(\hbar,\beta)$. Before turning to our main theorem, let us justify that why such a choice is not unnatural.

First of all, we have chosen $(\beta,\epsilon_2)$ for notational convenience, but one can equivalently take $(-\beta^{-1},\epsilon_1)$ as independent variables, and all our results hold (with tiny adjustments of signs). It is then worth noting that, as discussed in \cite{CDO24-2,CGGO26}, combinatorial and geometric interpretations arise after rescaling $\omega_{g,n}$ by $\beta^{-g/2}$ which is equivalent to considering the $\epsilon_1$-expansion instead of $\hbar$. In summary, although coefficients in the $\hbar$-expansion naturally appear in refined topological recursion, it has been observed that they admit better properties after rescaling by certain powers of $\beta$. It turns out that modularity also goes along this line.

With this setting, we prove our main theorem stating that the discontinuities in Lemma \ref{lem:q-pochh} as functions of $(\epsilon_2,\beta)$ are Jacobi quantum modular functions.

\begin{theorem}\label{thm:QM}
Set $\mu\in\BZ$. The discontinuities ${\rm disc}_{\theta_\pm}[F^{\rm st}](-\beta^{\frac12}\epsilon_2/t;\beta)$ are Jacobi quantum modular functions with respect to the elliptic variable $t/\epsilon_2$ and the modular parameter~$\beta$.
\end{theorem}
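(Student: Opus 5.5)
We set $t=1$ throughout and restore it at the end by rescaling (Remark~\ref{rem:t-dependence}). The plan is to put both discontinuities in closed form in the variables $z\coloneqq 1/\epsilon_2$ and $\tau\coloneqq\beta$, using Lemma~\ref{lem:q-pochh}. Substituting $\hbar=-\beta^{1/2}\epsilon_2$ gives $2\pi\ri\beta^{1/2}/\hbar=-2\pi\ri z$ and $2\pi\ri\beta^{-1/2}/\hbar=-2\pi\ri z/\beta$. Hence
\begin{equation}
f_+(z;\beta)\coloneqq{\rm disc}_{\theta_+}[F^{\rm st}]=\log\big((-1)^{1-\mu}\re^{2\pi\ri z}q^{\frac{1+\mu}{2}};q\big)_\infty,
\end{equation}
\begin{equation}
f_-(z;\beta)\coloneqq{\rm disc}_{\theta_-}[F^{\rm st}]=-\log\big((-1)^{1+\mu}\re^{2\pi\ri z/\beta}\tilde q^{\frac{1+\mu}{2}};\tilde q\big)_\infty .
\end{equation}
Both are holomorphic on the relevant domain of $\BC\times\BH$, and we take weight $k=0$ and index $m=0$. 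The cocycle conditions are closed under composition, since $h_{\gamma\gamma'}$ is a sum of slashed versions of $h_\gamma$ and $h_{\gamma'}$ and the corresponding cut domains are compatible. It therefore suffices to check the elliptic shifts and the generators $S$ and $T$ (or $T^2$, if the half-integral power $q^{(1+\mu)/2}$ forces us to pass to the subgroup generated by $S$ and $T^2$).

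\textbf{Elliptic and $T$-transformations.} The shift $z\mapsto z+1$ leaves $f_+$ invariant, so $g_{(0,1)}[f_+]=0$. For $z\mapsto z+\beta$ the argument gets multiplied by $q$. By \eqref{eq:pochh-id} with $n=\pm1$, $g_{(1,0)}[f_+]$ is then $\mp\log(1-(-1)^{1-\mu}\re^{2\pi\ri z}q^{(1\pm1+\mu)/2\cdots})$, a single explicit factor that extends analytically wherever needed. Under $\beta\mapsto\beta+1$ we have $q\mapsto q$ and $q^{(1+\mu)/2}\mapsto(-1)^{1+\mu}q^{(1+\mu)/2}$. This sign changes the argument, so we either absorb it by the elliptic shift $z\mapsto z+\tfrac12$ (combined into the $g$-transformations) or restrict to $T^2$. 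In either case the resulting cocycle vanishes. The same argument applies to $f_-$ after conjugating by $S$, since $f_-$ is a $\tilde q$-series.

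\textbf{$S$-transformation.} With $m=0$ and $k=0$, we have $h_S[f_+](z;\beta)=f_+(z/\beta;-1/\beta)-f_+(z;\beta)$. Under $\beta\mapsto-1/\beta$ we have $q\mapsto\tilde q$ and $\re^{2\pi\ri z}\mapsto\re^{2\pi\ri z/\beta}$. Comparing with the formula for $f_-$ and using that $\mu\in\BZ$ to identify $(-1)^{1-\mu}=(-1)^{1+\mu}$, we get $f_+(z/\beta;-1/\beta)=-f_-(z;\beta)$. Hence
\begin{equation}
h_S[f_+]=-(f_++f_-),
\end{equation}
and symmetrically $h_S[f_-]=f_++f_-$ up to the same identification. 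So everything reduces to showing that $f_++f_-$, a ratio of a $q$- and a $\tilde q$-Pochhammer symbol, extends analytically in $\beta$ from $\BH$ to $\BC_S=\BC\setminus\BR_{\le0}$.

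\textbf{Main step.} To do this we match $f_++f_-$ with Faddeev's formula \eqref{Faddeev and q-Pochhammer} at $\bb=\beta^{1/2}$; then $\bb^2=\beta$ and the $q,\tilde q$ there coincide with ours. Solve $\re^{2\pi\bb(x+\cb)}=(-1)^{1-\mu}\re^{2\pi\ri z}q^{(1+\mu)/2}$ for $x$, which is linear in $z$ and $\mu$. Then compute the denominator argument $\re^{2\pi\bb^{-1}(x-\cb)}$ and compare it with $(-1)^{1+\mu}\re^{2\pi\ri z/\beta}\tilde q^{(1+\mu)/2}$. They should differ by an integer power $\tilde q^{n}$ with $n$ depending linearly on $\mu$; this is exactly where $\mu\in\BZ$ is needed. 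Applying \eqref{eq:pochh-id}, we then get
\begin{equation}
f_++f_-=\log\Phi_{\beta^{1/2}}(x)+\log(\,\cdot\,;\tilde q)_{n},
\end{equation}
where the second term is a finite Pochhammer product in $\tilde q=\re^{-2\pi\ri/\beta}$, hence holomorphic for $\beta\neq0$. Since $\Phi_\bb$ is holomorphic (away from its explicit zeros and poles in $x$) for $\bb\notin\ri\BR$, that is for $\beta\notin\BR_{\le0}$, the cocycle extends to $\BC_S$.

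I expect this matching to be the main obstacle. It requires tracking the $\cb$ shifts and the integer offset $n$, and choosing branches of $\log$ consistently so that the extension is genuinely single-valued on $\BC_S$ for $z$ away from the zero/pole set of $\Phi_\bb$. If the direct matching fails by a sign, I would first try replacing $\bb$ by $-\bb$ or $\bb^{-1}$ using \eqref{eq: symmPhib}. Finally, the discontinuities along $\theta_\pm+\pi$ follow from the relation stated after Lemma~\ref{lem:q-pochh}, since they differ from $\theta_\pm$ by $z\mapsto-z$ and finite Pochhammer factors, which proves Theorem~\ref{thm-2-intro} as well.
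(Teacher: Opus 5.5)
Your route is the paper's: closed forms from Lemma~\ref{lem:q-pochh}, $T$-invariance of the $q$-series, and $f_+|_S=-f_-$ for $\mu\in\BZ$. The cocycle $h_S[f_+]=-(f_++f_-)$ is then rewritten via \eqref{Faddeev and q-Pochhammer} at $\bb=\beta^{1/2}$, and the elliptic shifts give finite Pochhammer factors. Your main step does go through. With $t=1$ and $x=\frac{\ri}{\epsilon_2\beta^{1/2}}+\frac{\ri\mu\mathfrak b}{2}$, both Pochhammer arguments in \eqref{Faddeev and q-Pochhammer} coincide exactly with those of $f_+$ and $f_-$, so $f_++f_-=\log\Phi_{\beta^{1/2}}(x)$. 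This holds for any $\mu$; integrality is really used only in $f_+|_S=-f_-$. Replacing $x$ by $x+\ri n\beta^{-1/2}$ produces the finite $\tilde q$-Pochhammer factor you predicted. Your caveat that for even $\mu$ the factor $q^{(1+\mu)/2}$ changes sign under $\beta\mapsto\beta+1$ is correct, and the paper glosses over it. Passing to $\langle S,T^2\rangle$ fixes it; the shift $z\mapsto z+\frac12$ is not available, because the elliptic shifts in Definition~\ref{def: Jacobi_holoQM} are integral.

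The step that fails is ``symmetrically $h_S[f_-]=f_++f_-$''. Since $S^2=-I$ acts by $z\mapsto -z$, you actually get $f_-|_S(z;\beta)=-f_+(-z;\beta)$, hence
\[
h_S[f_-](z;\beta)=-f_+(-z;\beta)-f_-(z;\beta)=-h_S[f_+]\big(\tfrac{z}{\beta};-\tfrac{1}{\beta}\big).
\]
Here the $q$-part and the $\tilde q$-part carry opposite signs of $z$, so this is not $\log\Phi_{\beta^{1/2}}$ of anything. It is $\log\Phi_{\bb}$ with $\bb^2=-1/\beta$, which continues to $\BC\setminus\BR_{\geq0}$ but not across $\BR_{>0}$. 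Indeed, after splitting off a $\log\Phi_{\beta^{1/2}}$, what remains is the $\tilde q$-series $f_-(-z;\beta)-f_-(z;\beta)$, which is singular densely along $\BR$. So $h_S[f_-]$ does not continue to $\BC_S$, and the $f_-$ half cannot be obtained by symmetry under Definition~\ref{def: Jacobi_holoQM} read literally.

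What does hold is transport of structure through $f_-=-f_+|_S$: for every $\gamma$ one has $h_{S\gamma S^{-1}}[f_-]=-h_\gamma[f_+]|_S$. Thus $f_-$ is quantum modular for the $S$-conjugated group, with cut domains moved by $S$. This is the sense in which the paper speaks of pre-composing with $S$, and it matches how you already treat $T$ for $f_-$. The paper's proof makes the same identification (it asserts that the $S$-cocycle of $f_-$ is given by \eqref{eq:coc_S}), so you share this defect with the paper rather than missing an idea it has. You should still replace your $f_-$ sentence by this conjugation argument.
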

\begin{proof}
As discussed in Remark \ref{rem:t-dependence}, the $t$-dependence is recovered by rescaling $\hbar$. To lighten the notation, we set in this proof $u_+=-\tfrac{1+\mu}{2}+\tfrac{1}{\epsilon_2}$, $u_-=\tfrac{1+\mu}{2}+\tfrac{1}{\beta\epsilon_2}$, and also we denote by
\begin{equation}
  f_{\pm}(\epsilon_2^{-1};\beta)=\
{\rm disc}_{\theta_\pm}[F^{\rm st}](-\beta^{\frac12}\epsilon_2;\beta)
\end{equation}
Recall that it is enough to prove the statements for the generators of ${\rm SL}_2(\BZ)$. 

First notice that the $T$-cocycle $h_T[f_{+}](\epsilon_2^{-1};\beta)$ is trivial as it is written as a $q$-series. Next, the $S$-cocycle $h_S[f_{+}](\epsilon_2^{-1};\beta)$ can be computed by manipulation of \eqref{eq: seriesPhib} and the identity \eqref{eq:pochh-id}. Explicitly, we have
\be\label{eq:S-cocycle-mu0}
\log(\re^{2\pi i(-\frac{1+\mu}{2}+\tfrac{1}{\beta\epsilon_2})}\tilde{q}^{\frac{1+\mu}{2}};\tilde{q})_\infty-\log(\re^{2\pi \ri u_+} q^{\frac{1+\mu}{2}};q)_\infty=\log\frac{\Phi_{\beta^{1/2}}\big(\frac{\ri}{\epsilon_2\beta^{1/2}}-\frac{\ri\mu}{2}\big)}{(\re^{2\pi iu_+};q)_{-\mu}}\,,
\ee
which holds when $\mu\in\BZ$. We then deduce that 
\be\label{eq:coc_S}
h_S[f_{+}](\epsilon_2^{-1};\beta) =\log\Phi_{\beta^{1/2}}\Big(\frac{\ri}{\epsilon_2\beta^{1/2}}-\frac{\ri\mu}{2}\Big)-\log(\re^{2\pi \ri u_+};q)_{-\mu}\,,
\ee
which is analytic for $\beta\in\BC\setminus\BR_{\le 0}$ and $\frac{\ri}{\epsilon_2\beta^{1/2}}=-\ri\frac{t}{\hbar}\in\BC$.\footnote{When $\mu=-1$, the result follows directly from~\cite[Theorem~A-41]{wheeler-thesis}.}

A similar argument works for $f_-(\epsilon_2^{-1};\beta)$. Indeed, for $\mu\in\BZ$, the $S$-action transforms ${\rm disc}_{\theta_+}$ into ${\rm disc}_{\theta_-}$ with sign, namely
\begin{equation}
{\rm disc}_{\theta_+}[F^{\rm st}](\epsilon_2^{-1};\beta)\,\big\vert_S=\log(\re^{2\pi\ri/\beta\epsilon_2};\tilde{q})_\infty=-{\rm disc}_{\theta_-}[F^{\rm st}](\epsilon_2^{-1};\beta)\,,
\end{equation}
ensuring that the $S$-cocycle of $f_-(\epsilon_2^{-1};\beta)$ is given by~\eqref{eq:coc_S}.

Finally, we verify another constraint, i.e. the condition on the function $g_{(\lambda,\mu)}$ in Definition~\ref{def: Jacobi_holoQM} with $\lambda,\nu\in\BZ$: for $f_{+}(\epsilon_2^{-1};\beta)$, we get 
\be
\begin{aligned}
g_{(\lambda,\nu)}[f_{+}](\epsilon_2^{-1};\beta)&=\log(\re^{2\pi i(-\frac{\mu+1}{2}+\frac{1}{\epsilon_2}+\lambda \beta+\nu)}q^{\frac{\mu+1}{2}};q)_\infty-\log(\re^{2\pi i(-\frac{\mu+1}{2}+\frac{1}{\epsilon_2})}q^{\frac{\mu+1}{2}};q)_\infty\\
&=\log(\re^{2\pi iu_+}q^{\frac{1+\mu}{2}};q)_\lambda\,,
\end{aligned}
\ee
where we used the identity~\eqref{eq:pochh-id}. Similarly, for $f_{-}$ we find
\be
\begin{aligned}
g_{(\lambda,\nu)}[f_{-}](\epsilon_2^{-1};-\beta^{-1})&=-\log(\re^{2\pi i(\frac{\mu+1}{2}+\frac{1}{\epsilon_2\beta}-\frac{\lambda\beta+\nu}{\beta})}\tilde{q}^{\frac{\mu+1}{2}};\tilde{q})_\infty+\log(\re^{2\pi i(\frac{\mu+1}{2}+\frac{1}{\epsilon_2\beta})}\tilde{q}^{\frac{\mu+1}{2}};\tilde{q})_\infty\\
&=-\log(\re^{2\pi iu_-}\tilde{q}^{\frac{1+\mu}{2}};\tilde{q})_\nu\,.
\end{aligned}
\ee
\end{proof}

One can similarly prove quantum modularity of ${\rm disc}_{\theta_\pm+\pi}$.

\begin{remark}\label{rmk:large hbar}
Interestingly, when $\mu=1$, the asymptotic expansion\footnote{In fact, \cite[Corollary~8]{wheeler-thesis} implies that ${\rm disc}_{\theta_+}[F^{\rm st}](-\beta^{1/2}\epsilon_2;\beta)=\sum_{n=0}^\infty\frac{1}{n!}\sum_{k\geq 1}\frac{k^n}{k}\frac{q^k}{q^{k}-1} \left(2\pi\ri\epsilon_2^{-1}\right)^n$ is convergent.} of the discontinuity ${\rm disc}_{\theta_+}[F^{\rm st}]$ as $\epsilon_2\to\infty$ and $\beta$ fixed becomes the generating function of the Eisenstein series of all weight. In fact, for $\mu=1$, we have
\be\label{eq:eis_disc}
\begin{aligned}
{\rm disc}_{\theta_+}[F^{\rm st}](-\beta^{1/2}\epsilon_2;\beta)&\sim\sum_{n=0}^\infty\frac{1}{n!}\sum_{k\geq 1}\frac{k^n}{k}\frac{q^{k}}{q^k-1} \left(\frac{2\pi\ri}{\epsilon_2}\right)^n\\
&=\log(q;q)_\infty+\sum_{n=1}^\infty\frac{1}{n!}\frac{B_n}{2n}\left(\frac{2\pi\ri}{\epsilon_2}\right)^{n}-\sum_{n=1}^\infty\frac{G_n(\beta)}{n!}\left(\frac{2\pi\ri}{\epsilon_2}\right)^{n}\,.
\end{aligned}
\ee
\end{remark}

\subsubsection{Concluding comments and future directions}\label{sec:conclusion}
The $S$-cocycle of ${\rm disc}_{\theta_+}[F^{\rm st}](-\beta^{1/2}\epsilon_2;\beta)$ corresponds to the sum of ${\rm disc}_{\theta_\pm}[F^{\rm st}]$ which can also be written as
\begin{equation}
h_S[f_{+}](\epsilon_2^{-1};\beta)=s^{\theta_--\varepsilon}[F^{\rm st}](\hbar;\beta)-s^{\theta_++\varepsilon}[F^{\rm st}](\hbar;\beta)\,
\end{equation}
for a sufficiently small $\varepsilon\in\mathbb{R}_{>0}$. This observation leads us to the following structural question in the refined setting:
\begin{question}
Let $F_g$ (resp. $F_g^\beta$) be the genus-$g$ free energy of (resp. refined) topological recursion on a certain (resp. refined) spectral curve. Define their generating $\hbar$-series $F^{\text{st}}$ (resp. $F^{\text{st},\beta}$) as in \eqref{eq:F_stable}. Now, let us assume that $\calB[F^{\text{st}}]$ has a sequence of singularities at $\zeta\in\zeta_\omega\BZ_{\neq 0}$, and also assume that there may exist the associated parameter $\mu_\omega$ in $F^{\text{st},\beta}$. Then, three natural questions arise from Theorem \ref{thm:QM}:
\begin{description}
  \item[Q1] Does $\calB[F^{\text{st},\beta}]$ have a pair of singularities at $\zeta\in\zeta_\omega\beta^{\pm1/2}\BZ_{\neq0}$?
  \item[Q2] When $\mu_\omega\in\BZ$, is the discontinuity at each ray a Jacobi quantum modular form?
  \item[Q3] When $\mu_\omega\in\BZ$, does the sum of the pair correspond to the $S$-cocycle?
\end{description}
\end{question}
These questions are well-supported by the set of examples discussed in \cite{KO22,KO23} where free energies are proved/conjectured in terms of double Bernoulli polynomials. Since our proof of Theorem \ref{thm:QM} relies on explicit expressions in terms of $q$-Pochhammer symbols, however, a conceptual understanding of quantum modularity with respect to $\beta$ is still missing. We hope that exploring these questions for nontrivial curves (e.g. higher-genus or singular refined spectral curves) would help unveiling a more fundamental reason of this feature.

Although the current formalism of refined topological recursion \cite{KO22,Osu23-1} cannot be applied to mirror curves of toric Calabi--Yau threefolds, we note that the free energy for the resolved conifold is computed in \cite{HK10} (see \cite[Appendix A]{BS24} for an explicit formula for the genus-$g$ free energy where their $(\epsilon_1,\epsilon_2)$ should be set to $(\beta^{1/2},- \beta^{-1/2})$)\footnote{We acknowledge T. Bridgeland and A. Brini for suggesting to consider the resolved conifold.}
\begin{equation}
 F_g^{\rm res. coni.}(t_r)= c_g(\beta)\sum_{n\in\mathbb{Z}}\frac{1}{(t_r+2\pi i n)^{2g-2}}\,,\quad g\geq2
\end{equation}
where $c_g(\beta)$ is given in \eqref{cg(beta)}, $t_r$ is the corresponding K\"{a}hler parameter and we have expanded ${\rm Li}_{3-2g}(e^{-t_r})$ in \cite{BS24}. Thus, for the resolved conifold, there are infinitely many rays labeled by $t_r+2\pi i n$, and \textbf{Q1-Q3} stand correct for each ray (though necessarily $\mu=0$ as discussed in Sec.~\ref{ssub:mu_0}). More generally, \textbf{Q1-Q3} are consistent with the conifold gap argument \cite{AMP23}.

\bigskip

On a related but different note, it has been shown in \cite{IK21,IK20} that topological recursion on a degree-two genus-zero spectral curve gives a solution to the Bridgeland's Riemann--Hilbert problem \cite{Bri16}. One may wonder if the story can be lifted to the refined setting, i.e. a \emph{quantum Riemann--Hilbert problem} due to \cite{BBS19}. More concretely, one may explore a relation between Stokes constants and the BPS invariants of~\cite{BBS19,KO23,KW24} for several genus-zero refined spectral curves including the Weber-type, in line with the discussions in \cite{AMP23,Chu25}. It turns out that this idea contains several technical subtleties, and the details will be reported soon by the second author of the present paper and O. Kidwai.

\newpage
\printbibliography
\end{document}